\newtheorem{proposition}{Proposition}[section]
\newcommand{\outprod}{\ensuremath{\boxtimes}}
\newcommand{\kprod}{\ensuremath{\otimes}}
\newcommand{\krprod}{\ensuremath{\odot}}
\newcommand{\E}{\mathbf{E}}
\newcommand{\RDF}{\textsc{rdf}\xspace}
\newcommand{\SPARQL}{\textsc{sparql}\xspace}
\newlength{\figwidth}
\newlength{\figsep}
\newcommand{\todo}[1]{\par{\raggedright\color{red} #1}\par}
\renewcommand\todo[1]{}
\begin{document}


\title{On Defining SPARQL with Boolean Tensor Algebra}
\author{Saskia Metzler\\
  Max-Planck-Institut f\"ur Informatik\\
  Saarbr\"ucken, Germany\\
  \texttt{saskia.metzler@mpi-inf.mpg.de}
\and
Pauli Miettinen\\
  Max-Planck-Institut f\"ur Informatik\\
  Saarbr\"ucken, Germany\\
  \texttt{pauli.miettinen@mpi-inf.mpg.de}
}

\maketitle
\begin{abstract}
The Resource Description Framework (\RDF) represents information as subject--predicate--object triples. These triples are commonly interpreted as a directed labelled graph. We propose an alternative approach, interpreting the data as a 3-way Boolean tensor. We show how \SPARQL queries -- the standard queries for \RDF\ -- can be expressed as elementary operations in Boolean algebra, giving us a complete re-interpretation of \RDF and \SPARQL. We show how the Boolean tensor interpretation allows for new optimizations and analyses of the complexity of \SPARQL queries. For example, estimating the size of the results for different join queries becomes much simpler.
\end{abstract}

\section{Introduction}
\label{sec:introduction}
The Resource Description Framework (\RDF)\footnote{\url{http://www.w3.org/TR/rdf-syntax-grammar}} is a W3C standard for representing information in the web. \RDF data consist of subject-predicate-object $(s, p, o)$ triples that are commonly treated as a directed labelled graph. Edges in the \RDF graph go from subjects to objects and predicates form the edge labels.  The \RDF data can be queried with the \SPARQL query language\footnote{\url{http://www.w3.org/TR/rdf-sparql-query}}. The labelled directed graph interpretation of \RDF data allows for defining some \SPARQL operations in an intuitive way, but it is not always the most convenient theoretical framework to work with: for example, the $(s, p, o)$ triples treat the predicate in no different way to the subject or object, yet in the graph interpretation, the predicate acts as the edge, while the subject and object are nodes. 

In this paper we propose to approach the \RDF data as a 3-way Boolean tensor, defining the \SPARQL operations using elementary tensor and matrix operations. While seeing \RDF data as tensor is nothing new, to the best of our knowledge we present the first comprehensive analysis of \SPARQL in terms of Boolean tensor operations (although there are prior work on efficiently implementing specific \SPARQL operations using binary tensors, see Section~\ref{sec:relatedWork}). 

We want to emphasize that we do not propose that efficient \RDF databases should be build on top of the tensor interpretation. We do think, however, that the tensor interpretation makes certain optimization techniques more intuitive than the graph interpretation. This will hopefully yield concrete benefits for the query processing, for example, by providing more efficient cardinality estimators for joins.

After a brief introduction to tensor notation and terminology (Section~\ref{sec:notation}), we will explain the data model and how the \SPARQL queries can be evaluated directly using tensor algebra (Sections~\ref{sec:dataModel} and~\ref{sec:operations}). For the sake of clarity, we present the correspondence between \SPARQL and tensor algebra using examples. We will then start studying the results we can get using our tensor formulation. First we will study the computation of joins and the estimation of their cardinalities (Section~\ref{sec:cardinality}), before moving to tensor decompositions (Section~\ref{sec:decompositions}) and their properties. Throughout these two sections, we list a number of propositions. Many of them are either straight forward, or have been proved earlier. Our goal in presenting these results is to show what kind of results our framework facilitates. We will cover some related work, interesting future directions, and conclusions in the last three sections.


\section{Notation}
\label{sec:notation}
Throughout this paper, we indicate vectors as bold type lower\-/case letters ($\vec{v}$), matrices as bold upper\-/case letters ($\matr{M}$), and tensors as bold\ upper\-/case calligraphic letters ($\tens{T}$). 

Element $(i,j,k)$ of a 3-way tensor $\tens{X}$ is denoted as $x_{ijk}$. A colon in a subscript denotes taking that mode entirely; for example, $\matr{X}_{::k}$ is the $k$th \emph{frontal slice} of $\tens{X}$ ($\matr{X}_k$ in short), and $\vec{x}_{:jk}$ is a \emph{tube} along the first way of $\tens{X}$. 

For a 3-way tensor \tens{X}, $x_{:jk}$ is the \emph{mode-1 (row) fibre}, $x_{i:k}$ is the \emph{mode-2 (column) fibre}, and $x_{ij:}$ is the \emph{mode-3 (tube) fibre}. Furthermore, $\matr{X}_{::k}$ is the $k$th frontal slice of \tens{X}.

A tensor can be \emph{unfolded} into a matrix by arranging its fibres as columns of a matrix. 
The mode-$i$ \emph{matricization}, of \byby{n}{m}{l} tensor $\tens{T}$, denoted $\matr{T}_{(n)}$, takes the mode-$i$ fibres of $\tens{T}$ and arranges them as the columns of matrix $\matr{T}_{(i)}$. For example, in mode-1 unfolding, the columns of $\tens{T}$ constitute the columns of $\matr{T}_{(1)}$ that has $n$ rows and $ml$ columns.

The outer product of vectors in $N$ modes is denoted by \outprod. That is, if \vec{a}, \vec{b}, and \vec{c} are vectors of length $m$, $n$, and $l$, respectively, $\tens X = \vec a \outprod \vec b \outprod \vec c$ is an \byby{m}{n}{l} tensor with $x_{ijk}=a_ib_jc_k$.

The \emph{Boolean tensor sum} of binary tensors $\tens{X}$ and $\tens{Y}$ is defined as $(\tens{X}\lor\tens{Y})_{ijk} = x_{ijk}\lor y_{ijk}$.
For binary matrices $\matr{X}$ and $\matr{Y}$ where $\matr{X}$ has $r$ columns and $\matr{Y}$ has $r$ rows their \emph{Boolean matrix product}, $\matr{X}\bprod\matr{Y}$, is defined as $(\matr{X}\bprod\matr{Y})_{ij} = \bigvee_{k=1}^r x_{ik}y_{kj}$. 

Let $\matr{X}$ be an \by{n_1}{m_1} matrix  and $\matr{Y}$ be an \by{n_2}{m_2} matrix. Their \emph{Kronecker (matrix) product} is the \by{n_1n_2}{m_1m_2} matrix $\matr{X}\kprod\matr{Y}$ defined by
\begin{equation}
\label{eq:kronecker}
\matr{X}\kprod\matr{Y}=
\left(\begin{matrix}
  x_{11}\matr{Y} & x_{12}\matr{Y} & \cdots & x_{1m_1}\matr{Y}\\
  x_{21}\matr{Y} & x_{22}\matr{Y} & \cdots & x_{2m_1}\matr{Y}\\
  \vdots            &    \vdots        & \ddots & \vdots              \\
  x_{n_11}\matr{Y} & x_{n_12}\matr{Y} & \cdots & x_{n_1m_1}\matr{Y} 
\end{matrix}\right).
\end{equation}

The \emph{Khatri--Rao (matrix) product} of $\matr{X}$ and $\matr{Y}$ is defined as ``column-wise Kronecker''. That is, $\matr{X}$ and $\matr{Y}$ must have the same number of columns ($m_1=m_2=m$), and their Khatri--Rao product $\matr{X}\krprod\matr{Y}$ is the \by{n_1n_2}{m} matrix defined as
\begin{align}\nonumber
\label{eq:khatri-rao}
\matr{X}\krprod\matr{Y}&=
\begin{pmatrix}
  \vec{x}_1\kprod\vec{y}_1, \vec{x}_2\kprod\vec{y}_2, \ldots, \vec{x}_m\kprod\vec{y}_m
\end{pmatrix}\\
&=\begin{pmatrix}
  x_{11}\vec{y}_1 &    \cdots & x_{1m}\vec{y}_m\\
  x_{21}\vec{y}_1 &    \cdots & x_{2m}\vec{y}_m\\
  \vdots            &    \ddots & \vdots              \\
  x_{n_11}\vec{y}_1 & \cdots & x_{n_1m}\vec{y}_m  \end{pmatrix}\,.
\end{align}
Notice that if $\matr{X}$ and $\matr{Y}$ are binary, so are $\matr{X}\kprod\matr{Y}$ and $\matr{X}\krprod\matr{Y}$.


\section{Data Model}
\label{sec:dataModel}
Given an \RDF graph $T$, let $S(T)$, $P(T)$, and $O(T)$ denote the sets of distinct subjects, predicates and objects respectively. The number of distinct subjects is denoted by $\abs{S(T)}$, or shorthand $\abs{S}$. Correspondingly $\abs{P(T)} = \abs{P}$ and $\abs{O(T)} = \abs{O}$ denote the number of predicates and objects. The shorthand notation we use in unambiguous cases, for instance a maximal index occurring in a subscript, and if the cardinality is common to all \RDF graphs under discussion.
 
To represent \RDF data by a binary tensor, we enumerate all subjects, predicates, and objects to obtain mappings from the items in $S(T)$, $P(T)$, and $O(T)$ to indices. Let $s_i$ denote the $i$th subject with the corresponding index $i=1,\ldots,\abs{S}$ and respectively $p_j$ the $j$th predicate with $j=1,\ldots,\abs{P}$ and  $o_k$ the $k$th object with $k=1,\ldots,\abs{O}$.

With this mapping we can represent  any \RDF graph $T$ as a 3-way binary \byby{\abs{S}}{\abs{P}}{\abs{O}} tensor $\tens{T}$. An element $(i,j,k)$ of \tens{T} is $1$ if and only if the respective subject-predicate-object triple $(s_i,p_j,o_k)$ is present in the \RDF graph $T$.


\section{SPARQL Queries}
\label{sec:operations}
Simple \SPARQL queries consist of two parts, a \texttt{SELECT} clause and a \texttt{WHERE} clause. The \texttt{SELECT} clause identifies the variables to appear in the query result. The \texttt{WHERE} clause provides the basic graph pattern to match against the \RDF data. A basic graph pattern is a set of triple patterns that matches a subgraph of the \RDF data. Triple patterns are subject-predicate-object triples with the option that variables can be placed instead of each  specific subject, predicate, or object.

An example for a simple \SPARQL query that has a single triple pattern as basic graph pattern is \texttt{SELECT * WHERE \{?a $T$:p$_j$ $T$:o$_k$\}}. The keyword \texttt{SELECT} acts as a projection operator. It identifies the variables to appear in the query result. In this case, we ask for all variables that have been defined.
The triple pattern in the \texttt{WHERE} part of the query has a variable for the subject, \texttt{?a}, indicated by the prepended \texttt{?} and a fixed predicate and object from \RDF graph $T$, \texttt{p}$_j$ and \texttt{o$_k$}. It matches all \RDF triples of $T$ that have predicate $\texttt{p}_j$ and object  $\texttt{o}_k$.  

If the \RDF data is represented as a binary 3-way tensor \tens{T}, the triple pattern selects the fibre $\vec{t}_{:jk}$. That is a vector of all subjects with predicate $j$ and object $k$. This vector has a $1$ at positions $i$ that correspond to an \RDF triple $(s_i,p_j,o_k)$ present in the \RDF graph $T$.

A slice of \tens{T} would be selected if only one mode was fixed by the query. A query \texttt{SELECT * WHERE \{?a $T$:p$_j$ ?b\}} for instance resembles $\matr{T}_{:j:}$. If we change the projection to \texttt{SELECT ?a} we would get the same number of results but only the indices $\texttt{s}_i$ of the non-zero positions in the slice would appear in the final solution sequence and the indices $\texttt{o}_k$ remain hidden.

\subsection{Basic Graph Patterns -- Join}
\label{sec:operations:join}
The basic graph pattern where a set of triple patterns must match can be understood as a join operation. Consider for example a basic graph pattern consisting of two triple patterns, \texttt{\{?a $T$:p$_i$ ?b\} .\ \{?c $U$:p$_j$ ?b\}}. This queries for all \RDF triples where the object \texttt{?b} is linked to a subject by predicate \texttt{p}$_i$ of \RDF graph $T$ as well as by predicate \texttt{p}$_j$ of \RDF graph $U$,  where $i \in \{1,\ldots,\abs{P(T)}\}$ and $j \in \{1,\ldots,\abs{P(U)}\}$. 

Note that the braces in this example are only used to increase readability and could be omitted. For more complex queries however, curly braces define group graph pattern and hence the processing order.

In Boolean tensor algebra, the triple patterns from the example above resemble the slices $\matr{T}_{:i:}$ and $\matr{U}_{:j:}$ of \RDF tensors \tens{T} and \tens{U}. A join operation on equal objects is equivalent to the Khatri--Rao product of $\matr{T}_{:i:}$ and $\matr{U}_{:j:}$. However, in order to compute the Khatri--Rao product, the length of the columns of both slices must match and to obtain a meaningful result, the labels must be in the same order. 

Therefore, in case the objects of \tens{T} and \tens{U} do not map one-to-one, all-zero slices associated with the object labels from \tens{T} that have no correspondent object label in \tens{U} are appended to \tens{U} and vice versa, such that $\abs{O(T)} = \abs{O(U)}$. Furthermore the labels of the objects in \tens{T} and \tens{U} need to be in a common order.

The result to the basic graph pattern is a matrix of size \by{\abs{S(T)}\abs{S(U)}}{\abs{O}}, 
\begin{align}  \nonumber
\matr{T}_{:i:}\krprod\matr{U}_{:j:} &=
\begin{pmatrix}
  \vec{t}_{:i1}\kprod\vec{u}_{:j1}, \vec{t}_{:i2}\kprod\vec{u}_{:j2}, \ldots, \vec{t}_{:i\abs{O}}\kprod\vec{t}_{:u\abs{O}} 
\end{pmatrix}\\
  &= 
\left(\begin{matrix}
  t_{1i1}\vec{u}_{:j1} & _{1i2}\vec{u}_{:j2} & \cdots & t_{1i\abs{O}}\vec{u}_{:j\abs{O}}\\
  t_{2i1}\vec{u}_{:j1} & t_{2i2}\vec{u}_{:j2} & \cdots & t_{2i\abs{O}}\vec{u}_{:j\abs{O}}\\
  \vdots            &    \vdots        & \ddots & \vdots              \\
  t_{\abs{S}i1}\vec{u}_{:j1} & t_{\abs{S}i2}\vec{u}_{:j2} & \cdots & t_{\abs{S}i\abs{O}}\vec{u}_{:j\abs{O}}
\end{matrix}\right)\; .
\end{align}
This matrix has non-zeroes where objects have corresponding subjects when the predicate is \texttt{p}$_i$ as well as \texttt{p}$_j$. It can be regarded as $\abs{S(T)}$ blocks of \by{\abs{S(U)}}{\abs{O}} matrices stacked on top of each other. Each block then corresponds to a subject \texttt{?a} from $T$ and each row per block corresponds to a subject \texttt{?c} from $U$. This view enables a straight forward conversion from the Khatri--Rao product to the \RDF triples: Instead of numbering the row indices from $1$ to $\abs{S(T)}\abs{S(U)}$, we refer to each row by its block index and the row index within the block. These indices link to both subjects and the corresponding object is encoded by the column index.

There are more options to join triple patterns. Instead of binding the objects, we could bind on the subjects as in \texttt{\{?a $T$:p$_i$ ?b\} .\ \{?a $U$:p$_j$ ?c\}}, or on both as in \texttt{\{?a $T$:p$_i$ ?b\} .\ \{?a $U$:p$_j$ ?b\}}, or even on none as in \texttt{\{?a $T$:p$_i$ ?b\} .\ \{?c $U$:p$_j$ ?d\}}, or the subject of one pattern with the object of the other as in \texttt{\{?a $T$:p$_i$ ?b\} .\ \{?c $U$:p$_j$ ?a\}}, and so on. Also, it is possible to fix not the predicates but the subjects or objects, or even both of them or none. In the remainder of this section, we examine these options is detail and see how their Boolean tensor algebra counterparts look like. For this we assume the preprocessing step of matching the labels in the modes to be joined to be already done. 

\subsubsection{One Each Fixed, One Bound}

The first option for joining two triple patterns we examine is where one variable in each pattern is fixed, like the predicate in the example above, and one is bound. The bound variable is indicated by a common name in both triple patterns of the query. To be part of the result \RDF triples from both triple patterns must agree on the value of the bound variable. This means, we obtain all triple patterns that have a common label in the dimension of the bound variable.

Suppose we fix the predicates $T$:\texttt{p}$_i$ and $U$:\texttt{p}$_j$ in both the triple patterns, where $i\in \{1,\ldots,\abs{P(T)}\}$ and $j\in \{1,\ldots,\abs{P(U)}\}$. This leaves us with four options how to treat the subject and the object:
\begin{enumerate}
  \item \texttt{\{?a $T$:p$_i$ ?b\} .\ \{?a $U$:p$_j$ ?c\}}
  \item \texttt{\{?a $T$:p$_i$ ?b\} .\ \{?c $U$:p$_j$ ?b\}}
  \item \texttt{\{?a $T$:p$_i$ ?b\} .\ \{?c $U$:p$_j$ ?a\}}
  \item \texttt{\{?a $T$:p$_i$ ?b\} .\ \{?b $U$:p$_j$ ?c\}}
\end{enumerate}
As well as the predicate, we can fix either the subjects 
or the objects 
 or a combination of them. This gives us nine more options for each of the above combinations. 
So, in total there are 36 different ways to perform a join with one variable bound and one in each pattern fixed. When calculating the Khatri--Rao product, the cases we need to distinguish however are less. Fixing a variable corresponds to selecting a slice of the \RDF tensor. A slice is a matrix and hence we only care whether the join should be performed on the rows or columns of either matrix. The case where the columns of both matrices are joined, we have already seen in the example above. The case of joining the rows of both matrices can be accomplished by using the transpose of the matrices, and transposing the Khatri--Rao product. Thus, a join on the subjects while the predicates are fixed amounts to
\begin{align}\label{eq:joinSubj}
\left((\matr{T}_{:i:})^T\krprod(\matr{U}_{:j:})^T\right)^T &=
\begin{pmatrix}\nonumber
  \vec{t}_{1i:}\kprod\vec{u}_{1j:}, \vec{t}_{2i:}\kprod\vec{u}_{2j:}, \ldots, \vec{t}_{\abs{O}i:}\kprod\vec{u}_{\abs{O}j:} 
\end{pmatrix}^T
 \\ &= 
\left(\begin{matrix}
  t_{1i1}\vec{u}_{1j:} & t_{2i1}\vec{u}_{2j:} & \cdots & t_{\abs{O}i1}\vec{u}_{\abs{O}j:}\\
  t_{1i2}\vec{u}_{1j:} & t_{2i2}\vec{u}_{2j:} & \cdots & t_{\abs{O}i2}\vec{u}_{\abs{O}j:}\\
  \vdots            &    \vdots        & \ddots & \vdots              \\
  t_{1i\abs{S}}\vec{u}_{1j:} & t_{2i\abs{S}}\vec{u}_{2j:} & \cdots & t_{\abs{O}i\abs{S}}\vec{u}_{\abs{O}j:}
\end{matrix}\right)^T\; .
\end{align}

Analogously, to perform a join between the columns of the first matrix and the rows of the second, we compute $(\matr{T}_{:i:})\krprod(\matr{U}_{:j:})^T$ and obtain an \by{\abs{S(T)}\abs{O(U)}}{\abs{S(U)}} matrix where we interpret the rows as $\abs{S(T)}$ blocks of $\abs{O(U)}$ objects. Note that in order compute the Khatri--Rao product, the number of columns (i.e. the dimensions on which we join) must match and hence we require $\abs{S(U)} = \abs{O(T)}$ in this case. 

Likewise, to join between the rows of the first and the columns of the second, we need to evaluate $(\matr{T}_{:i:})^T\krprod\matr{U}_{:j:}$. This yields an \by{\abs{O(T)}\abs{S(U)}}{\abs{S(T)}} matrix where we interpret the rows as $\abs{O(T)}$ blocks of $\abs{S(U)}$ subjects.

\subsubsection{Two Each Fixed, One Bound}

As soon as two variables in a triple pattern are fixed, we select a vector from the corresponding \RDF tensor \tens{T}. The triples described by the pattern \texttt{\{s$_i$ $T$:p$_j$ ?a\}} for example amount to the non-zero entries in $\vec{t}_{ij:}$. A query like \texttt{\{$T$:s$_i$ $T$:p$_j$ ?a\} .\ \{$U$:s$_k$ $U$:p$_l$ ?a\}}, where $i\in \{1,\ldots,\abs{S(T)}\}$, $k\in \{1,\ldots,\abs{S(U)}\}$, $j\in \{1,\ldots,\abs{O(T)}\}$, and $l\in \{1,\ldots,\abs{O(U)}\}$, returns those triples where there is a $1$ in common positions in both vectors, i.e. $\vec{t}_{ij:}\wedge\vec{u}_{kl:}$. 

If we interpret the vectors $\vec{t}_{ij:}$ and $\vec{u}_{kl:}$ as two \by{\abs{O}}{1} matrices, we can use the Khatri--Rao product $\vec{t}_{ij:}\krprod\vec{u}_{kl:}$ to express the same operation.  Analogously, we can fix any two variables from each triple pattern and bind the remaining one. Furthermore, we can easily join also a triple pattern with two fixed variables with a triple pattern with one bound variable using the Khatri--Rao product. The query \texttt{\{$T$:s$_i$ $T$:p$_j$ ?a\} .\ \{?b $U$:p$_l$ ?a\}} for instance would yield a \by{1\cdot\abs{S(T)}}{\abs{O}} result.

\subsubsection{One Fixed, One Bound}

It is also an option to fix only one variable in one of the triple patterns. Now we examine the Boolean tensor algebra version of a query like \texttt{\{?a $T$:p$_j$ ?c\} .\ \{?d ?b ?c\}} with $j\in \{1,\ldots,\abs{P(T)}\}$. This query asks for all triples with predicate \texttt{$T$:p}$_j$ that have an object common to any other triple in the \RDF data. Thus, in case of an \RDF tensor representation, this amounts to a column-wise matrix--tensor multiplication. This type of multiplication can be achieved by computing the Khatri--Rao product between the matrix $\matr{T}_{:j:}$ representing the left triple pattern and each slice $\matr{U}_{:k:}$ of $\tens{U}$ along the predicates $k=1,\ldots,\abs{P(U)}$ together representing the right triple pattern. The outcome then is a 3-way tensor with $\abs{P(U)}$ slices, each of size \by{\abs{S(T)}\abs{S(U)}}{\abs{O}}.

Using the matricization of \tens{U} along the mode to be joined, in this case the object and hence the third mode $\matr{U}_{(3)}$, we can express the calculation described above in a concise form by
\begin{align}
\matr{T}_{:j:}\krprod\matr{U}_{(3)} &= 
\begin{pmatrix}
  \vec{t}_{:j1}\kprod\vec{u}_{(3):1}, \vec{t}_{:j2}\kprod\vec{u}_{(3):2}, \ldots, \vec{t}_{:j\abs{O}}\kprod\vec{u}_{(3):\abs{O}} 
\end{pmatrix} \nonumber\\
  &= 
\left(\begin{matrix}
  t_{1j1}\vec{u}_{(3):1} & t_{1j2}\vec{u}_{(3):2} & \cdots & t_{1j\abs{O}}\vec{u}_{(3):\abs{O}}\\
  t_{2j1}\vec{u}_{(3):1} & t_{2i2}\vec{u}_{(3):2} & \cdots & t_{2i\abs{O}}\vec{u}_{(3):\abs{O}}\\
  \vdots            &    \vdots        & \ddots & \vdots              \\
  t_{\abs{S}j1}\vec{u}_{(3):1} & t_{\abs{S}j2}\vec{u}_{(3):2} & \cdots & t_{\abs{S}j\abs{O}}\vec{u}_{(3):\abs{O}}
\end{matrix}\right)\; ,
\end{align}
where $\matr{U}_{(3)}$ is a matrix of size \by{\abs{P(U)}\abs{S(U)}}{\abs{O}}. The result is thus of size \by{\abs{S(T)}\abs{P(U)}\abs{S(U)}}{\abs{O}} and in order to translate back to an \RDF graph, the first dimension is regarded as $\abs{S(T)}$ blocks each with $\abs{P(U)}$ blocks of $\abs{S(U)}$ items. This view enables to address every field in the result matrix with a tuple \texttt{(s$_i$ p$_j$ s$_k$ o$_l$)}, where $i\in \{1,\ldots,\abs{S(T)}\}$, $j\in \{1,\ldots,\abs{P(U)}\}$, $k\in \{1,\ldots,\abs{S(U)}\}$, and $l\in \{1,\ldots,\abs{O}\}$. The positions of the non-zeroes addressed in this way answer the \RDF query. 

Of course, if a similar query is posed with the subjects bound instead of the objects, we need to apply the transpose before computing the Khatri--Rao product, and as well transpose the outcome. (This is similar to what has been discussed in case of one variable fixed in each triple.)

\subsubsection{Some Fixed, None Bound}

We now examine joins between triple patterns where none of the variables is bound. Technically, these are not any more joins. The simplest such case that is worth looking at is where in each of the triple patterns two variables are fixed. There is also the case of all three variables fixed, but this is no more than to compare whether two \RDF triples are equal.

A query with two variables fixed in each triple pattern for example is \texttt{\{$T$:s$_i$ $T$:p$_j$ ?a\} .\ \{$U$:s$_k$ $U$:p$_l$ ?b\}}\,, where $i\in \{1,\ldots,\abs{S(T)}\}$, $j\in \{1,\ldots,\abs{P(T)}\}$. $k \in \{1,\ldots,\abs{S(U)}\}$, and $l\in \{1,\ldots,\abs{P(U)}\}$. The expected result from that query is a list of all combinations of the triples from the left pattern and the triples from the right pattern. As triples in each of the patterns differ only in their objects, this means we are looking for all possible combinations of objects matching the left triple pattern and objects matching the right triple pattern. These we obtain by taking the outer product of $\vec{t}_{ij:}$ and $\vec{u}_{kl:}$, that is $\vec{t}_{ij:}(\vec{u}_{kl:})^T$.

Similarly, for a join query with one variable fixed in each triple pattern such as \texttt{\{?a $T$:p$_j$ ?b\} .\ \{?c $U$:p$_l$ ?d\}} the result is all combinations of subject-object pairs from the left triple pattern with subject-object pairs from the right triple pattern. In terms of Boolean algebra such operations can be expressed using the Kronecker product, a  generalization of the outer product,
\begin{align}
\matr{T}_{:j:}\kprod\matr{U}_{:l:} = 
\left(\begin{matrix}
  t_{1j1}\matr{U}_{:l:} & t_{1j2}\matr{U}_{:l:} & \cdots & t_{1j\abs{O}}\matr{U}_{:l:}\\
  t_{2j1}\matr{U}_{:l:} & t_{2j2}\matr{U}_{:l:} & \cdots & t_{2j\abs{O}}\matr{U}_{:l:}\\
  \vdots            &    \vdots        & \ddots & \vdots              \\
  t_{\abs{S}j1}\matr{U}_{:l:} & t_{\abs{S}j2}\matr{U}_{:l:} & \cdots & t_{\abs{S}j\abs{O}}\matr{U}_{:l:} 
\end{matrix}\right).
\end{align}

An even more general notion of the outer product is needed to express a join between two triple patterns when no variable is fixed or bound, as in \texttt{\{?a ?b ?c\} . \{?d ?e ?f\}}. This can be accomplished multiplying two tensors, $\tens{T}\kprod\,\tens{U}$, as discussed in~\cite{bader2006algorithm}. This operation, although it is possible appears to have little practical relevance.  

\subsubsection{None Fixed, Some Bound}

The other extreme scenario is to have no fixed variables but varying amounts of bound variables in the triple patterns to join. The last case discussed in the previous section already is an example of such a join. The other end however is much simpler to start with: Consider a join query where all variables are bound. That would be \texttt{\{?a ?b ?c\} .\ \{?a ?b ?c\}}, which is asking for all \RDF triples that occur in the \RDF graph $T$ and that match all \RDF triples that occur in the \RDF graph $U$. So, this query just yields all triples that occur in both $T$ and $U$. In Boolean tensor algebra, this amounts to the position-wise \textsc{or} between the corresponding tensors \tens{T} and \tens{U} of common dimension \byby{\abs{S}}{\abs{P}}{\abs{O}} (possibly achieved by a preprocessing step),
\begin{align}
\bigvee_{i=1}^{\abs{S}}\bigvee_{j=1}^{\abs{P}}\bigvee_{k=1}^{\abs{O}} \vec{t}_{ijk}\kprod\vec{u}_{ijk}\,,
\end{align}
 where $i=1,\ldots,\abs{S}$, $j=1,\ldots,\abs{P}$, and $k=1,\ldots,\abs{O}$.

Similarly, if two variables are bound as in \texttt{\{?a ?b ?c\} .\ \{?a ?e ?c\}}), we receive only those triples where the subjects and objects match each other. This means, we take the outer product of each tube $\vec{t}_{i:k}$ from \tens{T} with the corresponding tube $\vec{u}_{i:k}$ from \tens{U},
\begin{align}
\bigvee_{i=1}^{\abs{S}}\bigvee_{k=1}^{\abs{O}} \vec{t}_{i:k}\kprod\vec{u}_{i:k}\,,
\end{align}
 where $i=1,\ldots,\abs{S}$, $k=1,\ldots,\abs{O}$. 

In case one variable bound as in \texttt{\{?a ?b ?c\} .\ \{?a ?e ?f\}} we receive only those triples where the subjects match each other. This means that, we take the Kronecker product of each slice $\matr{T}_{i::}$ from \tens{T} with the corresponding slice $\matr{U}_{i::}$ from \tens{U}. For $i=1,\ldots,\abs{S}$, we get
\begin{align}
\bigvee_{i=1}^{\abs{S}} \matr{T}_{i::}\kprod\matr{U}_{i::}\,,
\end{align}
a matrix is of size \by{\abs{S(T)}\abs{S(U)}}{\abs{O(T)}\abs{O(U)}}.

\subsection{
Optional Graph Patterns -- Left Outer Join}
\label{sec:operations:optional}
Apart from the basic graph patterns, there are various other ways to combine triple patterns: group graph patterns, optional graph patterns, alternative graph patterns, and patterns on named graphs. Group graph patterns define the evaluation hierarchy which directly translates to associativity of Boolean tensor operations. Alternative graph patterns refer to the union of triple patterns which easily translates to \textsc{or}ing the respective tensor slices. Patterns on named graphs refer to the possibility to join triple patterns from different \RDF graphs. 

The combination of triple patterns discussed now are the optional graph patterns. These patterns resemble left outer joins. This means, all triples from the left triple pattern have to appear in the final result. If they do not match any triple from the right pattern, they are listed paired with a blank item. Hence, we use all the triples that match, like in the normal join, and some more. To accomplish a left outer join operation using Boolean operations, we can thus use the join operation discussed above, but additionally we need to handle blank items.

Consider the case where we join two triple patterns with fixed predicates and bound subjects (Equation \ref{eq:joinSubj}). In the Khatri--Rao product, we get a $1$ at positions where the left and right triples match with their subjects. Additionally now we need to cover triples with subjects that occur in the left triple pattern but not in the right. To do that we append a column to $\matr{U}_{:j:}$, the slice corresponding to the right triple pattern. That column has a $1$ in rows where $\matr{U}_{:j:}$ is all-zero and $\matr{T}_{:i:}$ (the slice corresponding to the left triple pattern) has at least one non-zero. To evaluate \texttt{\{?a $T$:p$_i$ ?b\} OPTIONAL \{?a $U$:p$_j$ ?c\}}, we first append the extra column to $\matr{U}_{:j:}$ and obtain
\begin{align}
\matr{U'}_{:j:} = \bigl[\matr{U}_{:j:}, \vec{k}\bigr] = \left[\matr{U}_{:j:}, \bigvee_{r=1}^{\abs{O(U)}}\vec{u}_{:ir} \wedge \neg\bigvee_{s=1}^{\abs{O(U)}}\vec{u}_{:js}\right]\,.
\end{align}
Then, like for the join on subjects, we compute the transposed Khatri--Rao product of the transposes of $\matr{T}_{:i:}$ and $\matr{U'}_{:j:}$, and receive an \by{\abs{S}}{\abs{O(T)}(\abs{O(U)}+1)} matrix as the result,
\begin{align}
&\left((\matr{T}_{:i:})^T\krprod(\matr{U'}_{:j:})^T\right)^T =
\begin{pmatrix}
  \vec{t}_{1i:}\kprod\vec{u'}_{1j:}, \vec{t}_{2i:}\kprod\vec{u'}_{2j:}, \ldots, \vec{t}_{\abs{O}i:}\kprod\vec{u'}_{(\abs{O}+1)j:} 
\end{pmatrix}^T\,.
\end{align}

The additional column we introduce actually stands for ``no value'' and should be treated accordingly in any follow up calculation on the obtained result.

\subsection{Unique Results -- Select Distinct}
\label{sec:operations:distinct}
Until now, we discussed different \texttt{WHERE} clauses. This section focuses on the keyword \texttt{DISTINCT}, a modifier of the \texttt{SELECT} clause. This keyword states that the solutions in the result sequence of the query must be unique, hence no duplicate solutions can occur. The na\"ive way to approach this type of query is to first compute the solution from the \texttt{WHERE} clause and then purge the duplicates. Of course, this can also be accomplished similarly treating the \RDF data as a binary tensor: Compute the result for the \texttt{WHERE} clause using the operations stated above, then \textsc{or} along the dimensions not asked for in the \texttt{SELECT} clause. But the binary tensor representation offers also a more straight forward approach for such queries: For a join query accompanied by \texttt{DISTINCT}, instead of computing the Khatri--Rao product and then \textsc{or}, the result can be obtained immediately.

The first case we examine is a join query where \texttt{SELECT DISTINCT} chooses the bound variable, such as \texttt{SELECT DISTINCT ?b WHERE \{?a $T$:p$_i$ ?b\} .\ \{?c $U$:p$_j$ ?b\}}\,.  Suppose we use the Khatri--Rao product to calculate the result of the \texttt{WHERE} clause. Then, the distinct objects \texttt{?b} are those corresponding to the non-zero columns,
\begin{align}
\bigvee_{k=1}^{\abs{S(T)}} (\matr{T}_{:i:} \krprod \matr{U}_{:j:})_{k:}\,.
\end{align}
In fact, the objects that constitute the result are those which occur at least once in both \matr{T}$_{:i:}$ and \matr{U}$_{:j:}$. Hence, this calculation simplifies to evaluating whether both the columns of \matr{T}$_{:i:}$ and \matr{U}$_{:j:}$ are non-zero,
\begin{align}
\left(\bigvee_{k=1}^{\abs{S(T)}} \vec t_{ki:}\right) \wedge \left(\bigvee_{k=1}^{\abs{S(U)}} \vec u_{kj:}\right)\,.
\end{align}
The resulting vector of length $\abs{O}$ has a $1$ at positions that refer to objects which are part of the result and zeroes elsewhere.

The next case is to uniquely select pairs of one bound and one free variable, as for instance in \texttt{SELECT DISTINCT ?a ?b WHERE \{?a $T$:p$_i$ ?b\} .\ \{?c $U$:p$_j$ ?b\}}\,. To evaluate this query, we first calculate the \by{\abs{S(T)}\abs{S(U)}}{\abs{O}} Khatri--Rao product as discussed in Section \ref{sec:operations:join}. Note that we treat the rows of the resulting matrix as $\abs{S(T)}$ blocks, each of $\abs{S(U)}$ columns. The next step then is to get a matrix of size \by{\abs{S(T)}}{\abs{O}} that has a $1$ at position $(m, n)$ if in the $m$-th block of the Khatri--Rao product the $n$-th column has at least one non-zero (with $m\in \{1,\ldots, \abs{S(T)}\}$ and $n \in \{1,\ldots,\abs{O}\}$). More straightforward, the same answer is obtained by taking those columns of \matr{T}$_{:i:}$ where the corresponding columns of \matr{U}$_{:j:}$ have at least one non-zero,
\begin{align}
\matr T_{:i:} \circ \left(\bigvee_{k=1}^{\abs{S(U)}} \vec u_{kj:}\right)\,.
\end{align}

The final case we examine is to uniquely select pairs of the unbound variables, as in the query \texttt{SELECT DISTINCT ?a ?c WHERE \{?a $T$:p$_i$ ?b\} .\ \{?c $U$:p$_j$ ?b\}}\,, where $i \in \{1,\ldots,\abs{P(T)}\}$ and $j \in \{1,\ldots,\abs{P(U)}\}$. We show how the na\"ive way to first compute the Khatri--Rao product followed by \textsc{or}-ing along the columns corresponds to the Boolean matrix product,
\begin{align}
\matr{T}_{:i:}\circ\matr{U}_{:j:}\,.
\end{align}
The query asks for all unique \texttt{?a}--\texttt{?c} pairs that match on \texttt{?b}. As described in Section \ref{sec:operations:join}, to get all \texttt{?a}--\texttt{?c} pairs that match on \texttt{?b} we compute the Khatri--Rao product of the respective slices of \tens{T} and \tens{U}. To retrieve only unique results in terms of subject--subject combinations, we compute \textsc{or} along the columns of the Khatri--Rao product (corresponding to the objects). This is 
\begin{align}\label{eq:oper:dist:kr}
 \bigvee_{k=1}^{\abs{O}}\left(\matr{T}_{:i:}\krprod\matr{U}_{:j:}\right)_{:k}  &=
\bigvee_{k=1}^{\abs{O}}\begin{pmatrix}
  \vec{t}_{:i1}\kprod\vec{u}_{:j1}, \vec{t}_{:i2}\kprod\vec{u}_{:j2}, \ldots, \vec{t}_{:i\abs{O}}\kprod\vec{u}_{:j\abs{O}} 
\end{pmatrix}_{:k}\\
  &= 
\left(\begin{matrix}
\bigvee_{k=1}^{\abs{O}}& t_{1ik}u_{1jk} \\
\bigvee_{k=1}^{\abs{O}}& t_{1ik}u_{2jk} \\
&\vdots\\
\bigvee_{k=1}^{\abs{O}}& t_{1ik}u_{\abs{S}jk} \\
\bigvee_{k=1}^{\abs{O}}& t_{2ik}u_{1jk} \\
\bigvee_{k=1}^{\abs{O}}& t_{2ik}u_{2jk} \\
&\vdots\\
\bigvee_{k=1}^{\abs{O}}& t_{\abs{S}ik}u_{\abs{S}jk} \\
\end{matrix}\right) = \left(\matr{T}_{:i:}\circ\matr{U}_{:j:}\right)_{(1)} \label{eq:oper:dist:bool} \;,
\end{align}
the vectorization of the Boolean matrix product $\matr{T}_{:i:}\circ\matr{U}_{:j:}$ along the columns. This matches the well-known fact that join-distinct in standard relational databases can be computed using the Boolean matrix product. In particular, we can as well apply the techniques for fast computation of these joins to this case (cf. Section~\ref{sec:join-card:bool}).

One conceptual difference between first computing the initial solution sequence from the \texttt{WHERE} clause and then applying \texttt{DISTINCT} over computing the result in one step is that the specified execution order cannot be obeyed. \SPARQL defines that \texttt{ORDER BY} statements must be applied on the initial solution sequence before the projection. This means, after processing everything from the \texttt{WHERE} clause but before anything from the \texttt{SELECT} statement. However, by using binary tensors instead of graphs to represent the \RDF data, and in particular by using lists of labels instead of sets, we already introduce an ordering even before processing the \texttt{WHERE} clause. If this ordering obeys the \texttt{ORDER BY} statement, we can also represent the output in the appropriate order without the intermediate step.

Note that \SPARQL also defines the keyword \texttt{REDUCED} that modifies the \texttt{SELECT} clause such that it outputs at most all results that \texttt{SELECT} without modifiers would yield and at least those results that \texttt{SELECT DISTINCT} yields. Hence, there is no new operation to define for that modifier, as we already discussed two options to provide a valid answer to a \texttt{SELECT REDUCED} query.

\subsection{Matching Alternatives -- Union}
\label{sec:operations:union}
The keyword \texttt{UNION} may be placed between two graph patterns in a \texttt{WHERE} clause. The result of a union query is a concatenation of the solution sequences of the graph patterns. We can interpret this as an \textsc{or} between the results of both graph patterns. Note however that no matching takes place in this type of query. The initial solution sequence comprises all results from both graph patterns, Each result is equipped with all the variables defined that occur in either of the graph patterns, possibly left blank. 

Consider \texttt{SELECT ?b WHERE \{?a $T$:p$_i$ ?b\} UNION \{?c $U$:p$_j$ ?b\}}. The initial solution sequence comprises all \texttt{?a}--\texttt{?b} pairs with an additional blank \texttt{?c}, as well as all  \texttt{?c}--\texttt{?b} pairs with an additional blank \texttt{?a}. This is the same as concatenating \matr{U}$_{:i:}$ and \matr{T}$_{:j:}$ along the columns. The final result would be all objects from $T$ with predicate \texttt{p}$_i$ and all objects from $U$ with predicate \texttt{p}$_j$. In our terminology, it would be a vector of length $\abs{O(T)}+\abs{O(U)}$ which has a $1$ at positions that either refer to an object of $T$ or $U$ that occurs in the result. Similarly, in  a query like \texttt{SELECT ?a ?c WHERE \{?a $T$:p$_i$ ?b\} UNION \{?c $U$:p$_j$ ?d\}} \matr{U}$_{:i:}$ and \matr{T}$_{:j:}$ would be concatenated along the diagonal, as no variable is bound. The final solution sequence would comprise all subjects from $T$ and all subjects from $U$, bound to separate variable names, each paired with a blank.

\subsection{Further Operations}
\label{sec:operations:others}
While join and union operations directly act on the data structure, \SPARQL also defines other types of operations that act on the labels of the data. For instance, it is possible to order the solution sequence using a comparator on the labels of one mode. Also, one can use filters to restrict the solutions to those for which the filter expression yields \textsc{true}. Filters can for example be comparators to restrict numeric values, or regular expressions to restrict the values of strings. 

The \texttt{ASK} statement can be used in place of \texttt{SELECT}. The result of such a query is \textsc{true} if the solution sequence obtained from the \texttt{WHERE} part is non-empty. Hence, if we calculate the result on Boolean tensors, we emit \textsc{true} as soon as the first non-zero appears in the result.

\SPARQL provides the keyword \texttt{CONSTRUCT} that allows generate a new \RDF graph from the result of a query. As any \RDF graph can be expressed as a Boolean tensor, an \RDF tensor can as well be constructed by such a query.


\section{Cardinality and Computation of Joins}
\label{sec:cardinality}

An important problem in query processing is to estimate the cardinality of join operations. This problem has attracted a significant amount of research in traditional relational databases. For \SPARQL joins, Neumann and Moerkotte~\cite{neumann11characteristic} proposed so called \emph{characteristic sets} to estimate the cardinalities of \SPARQL joins (specifically, the type of joins they called \emph{star joins}). In this section we study how the size of  join operations can be computed (or estimated) given our framework.

\subsection{Khatri--Rao Products}
\label{sec:join-card:kr}

Let us first assume we have stored the marginal sums along each mode of the \byby{\abs{S}}{\abs{P}}{\abs{O}} data tensor $\tens{T}$. That is, we have three matrices, $\matr{P}$ (\by{\abs{P}}{\abs{O}}), $\matr{Q}$ (\by{\abs{S}}{\abs{O}}), and $\matr{R}$ (\by{\abs{S}}{\abs{P}}), for the column, row, and tube marginal sums, respectively. The element $(i,j)$ of $\matr{Q}$, for example, would be computed as $r_{ij} = \sum_{k=1}^{\abs{P}} t_{ijk}$.

The number of triples returned by a join, with no projection or \texttt{DISTINCT} keyword, is the number of non-zeroes in the result. When a join can be expressed as a Khatri--Rao product between two matrices $\matr{A}$ and $\matr{B}$ (as is the case with most joins considered in Sections~\ref{sec:operations:join} and~\ref{sec:operations:optional}), the number of non-zeroes in $\matr{A}\krprod\matr{B}$ can be determined exactly using the column marginal sums of $\matr{A}$ and $\matr{B}$. Specifically, if $\matr{A}$ and $\matr{B}$ both have $n$ columns, let $\vec{\sigma}^{\matr{A}} = (\sigma^{\matr{A}}_i)_{i=1}^n$ and $\vec{\sigma}^{\matr{A}} = (\sigma^{\matr{B}}_i)_{i=1}^n$ be row vectors that contain the column marginals of $\matr{A}$ and $\matr{B}$, respectively (e.g.\ $\sigma^{\matr{A}}_i = \sum_{j} a_{ji}$). 

\begin{proposition}
\label{prop:card:kr}
Let $\vec{\sigma}^{\matr{A}}$ and $\vec{\sigma}^{\matr{B}}$ be as above. The number of non-zeroes in $\matr{A}\krprod\matr{B}$ is
\begin{equation}
  \label{eq:card:kr}
  \abs{\matr{A}\krprod\matr{B}} = \sum_{i=1}^n \sigma^{\matr{A}}_i \sigma^{\matr{B}}_i = \vec{\sigma}^{\matr{A}}(\vec{\sigma}^{\matr{B}})^T\; .
\end{equation}
  \end{proposition}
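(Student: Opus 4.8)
\section*{Proof proposal}

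The plan is to exploit the column structure of the Khatri--Rao product directly. By definition, $\matr{A}\krprod\matr{B}$ is the \by{n_1n_2}{n} matrix whose $i$th column is $\vec{a}_i\kprod\vec{b}_i$, where $\vec{a}_i$ and $\vec{b}_i$ are the $i$th columns of $\matr{A}$ and $\matr{B}$. Since distinct columns of a matrix occupy disjoint sets of entries, the number of non-zeroes of $\matr{A}\krprod\matr{B}$ is simply the sum over $i=1,\dots,n$ of the number of non-zeroes in each Kronecker product $\vec{a}_i\kprod\vec{b}_i$; no inclusion--exclusion across columns is needed. Thus it suffices to count non-zeroes in $\vec{a}_i\kprod\vec{b}_i$ for a fixed $i$.

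For this per-column count I would write out the entries explicitly. With $\vec{a}_i = (a_1,\dots,a_{n_1})^T$ and $\vec{b}_i = (b_1,\dots,b_{n_2})^T$, the entry of $\vec{a}_i\kprod\vec{b}_i$ in position $(p-1)n_2 + q$ is $a_p b_q$. Because $\matr{A}$ and $\matr{B}$ are binary, each such product equals $1$ precisely when $a_p = 1$ and $b_q = 1$, and the excerpt already notes that $\matr{A}\krprod\matr{B}$ is then binary, so counting its non-zeroes is the same as summing its entries. Hence the number of non-zeroes in $\vec{a}_i\kprod\vec{b}_i$ equals $\abs{\{p : a_p=1\}}\cdot\abs{\{q : b_q=1\}} = \bigl(\sum_p a_p\bigr)\bigl(\sum_q b_q\bigr) = \sigma^{\matr{A}}_i\,\sigma^{\matr{B}}_i$, using the definition of the column marginals.

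Combining the two observations gives $\abs{\matr{A}\krprod\matr{B}} = \sum_{i=1}^n \sigma^{\matr{A}}_i\,\sigma^{\matr{B}}_i$, and recognizing the right-hand side as the dot product of the two column-marginal row vectors yields $\vec{\sigma}^{\matr{A}}(\vec{\sigma}^{\matr{B}})^T$, as claimed. The argument is elementary; the only point that needs care -- and the closest thing to an obstacle -- is making explicit why there is no overcounting, namely that the Khatri--Rao construction keeps the columns disjoint and that a single Kronecker product $\vec{a}_i\kprod\vec{b}_i$ of $0/1$ vectors involves no Boolean summation of its entries, so no ``$\lor$-collisions'' can occur within it. If one instead tried to count on the un-matricized result, or invoked the standard (non-Boolean) intuition without noting binarity, the bookkeeping would be less transparent but the conclusion unchanged.
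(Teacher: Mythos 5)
Your proof is correct: counting the non-zeroes column by column, using that each column of $\matr{A}\krprod\matr{B}$ is $\vec{a}_i\kprod\vec{b}_i$ with exactly $\sigma^{\matr{A}}_i\sigma^{\matr{B}}_i$ ones, is precisely the elementary argument the paper relies on (it states the proposition without an explicit proof, treating this counting as immediate). Nothing is missing, and your remark about why no overcounting across columns can occur is a sensible clarification rather than a deviation.
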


The column marginal vectors $\vec{\sigma}$ are naturally just appropriate rows or columns of the tensor marginal sum matrices $\matr{P}$, $\matr{Q}$, or $\matr{R}$. If they are stored in a sparse format, the size of the join can be computed exactly in time $\Theta(\alpha + \beta)$, where $\alpha$ and $\beta$ are the number of non-empty columns of $\matr{A}$ and $\matr{B}$, respectively.

We can also obtain an upper bound for the size of the join in constant time if in addition we store the $l_2$-norms for each row and column of the marginal sum matrices (that is, $\norm{\vec{\sigma}}$ for every possible $\vec{\sigma}$):

\begin{proposition}
  \label{prop:card:cosine}
  Let $\vec{\sigma}^{\matr{A}}$ and $\vec{\sigma}^{\matr{B}}$ be as above. Then
  \begin{equation}
    \label{eq:card:cosine}
    \abs{\matr{A}\krprod\matr{B}} \leq \norm{\vec{\sigma}^{\matr{A}}}\norm{\vec{\sigma}^{\matr{B}}}\; .
  \end{equation}
\end{proposition}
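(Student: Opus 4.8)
The plan is to reduce the claim immediately to the exact formula established in Proposition~\ref{prop:card:kr} and then apply the Cauchy--Schwarz inequality. By that proposition, the left-hand side equals the Euclidean inner product of the two column-marginal vectors, $\abs{\matr{A}\krprod\matr{B}} = \vec{\sigma}^{\matr{A}}(\vec{\sigma}^{\matr{B}})^T = \sum_{i=1}^n \sigma^{\matr{A}}_i \sigma^{\matr{B}}_i$, so nothing about the Khatri--Rao product itself needs to be re-examined; the statement becomes a pure inequality about two vectors of nonnegative reals.

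First I would invoke Proposition~\ref{prop:card:kr} to rewrite $\abs{\matr{A}\krprod\matr{B}}$ as $\langle \vec{\sigma}^{\matr{A}}, \vec{\sigma}^{\matr{B}}\rangle$. Second, I would apply Cauchy--Schwarz, $\langle \vec{\sigma}^{\matr{A}}, \vec{\sigma}^{\matr{B}}\rangle \le \norm{\vec{\sigma}^{\matr{A}}}\,\norm{\vec{\sigma}^{\matr{B}}}$, which is exactly the desired bound~\eqref{eq:card:cosine}. (One does not even need absolute values inside the inner product, since both $\vec{\sigma}^{\matr{A}}$ and $\vec{\sigma}^{\matr{B}}$ are entrywise nonnegative, being marginal sums of binary matrices; this only strengthens the reading of the inequality.) Since the $l_2$-norms $\norm{\vec{\sigma}}$ of the relevant rows/columns of the marginal matrices $\matr{P}$, $\matr{Q}$, $\matr{R}$ are assumed precomputed and stored, the right-hand side is available in constant time, which is the point of the proposition.

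I would also include a short remark on tightness: equality in Cauchy--Schwarz holds precisely when $\vec{\sigma}^{\matr{A}}$ and $\vec{\sigma}^{\matr{B}}$ are linearly dependent, i.e.\ when the two column-marginal profiles are proportional; equivalently, the ratio of the bound to the true join size is $1/\cos\theta$, where $\theta$ is the angle between $\vec{\sigma}^{\matr{A}}$ and $\vec{\sigma}^{\matr{B}}$ (which explains the label \ref{prop:card:cosine}). There is essentially no obstacle here: the only thing to be careful about is making sure the statement is quoting the inner-product form of Proposition~\ref{prop:card:kr} correctly and that the norms in the bound refer to the same column-marginal vectors; the mathematical content is a one-line appeal to Cauchy--Schwarz.
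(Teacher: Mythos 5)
Your proposal is correct and follows essentially the same route as the paper: the paper also starts from the exact count in Proposition~\ref{prop:card:kr} and bounds the inner product by $\norm{\vec{\sigma}^{\matr{A}}}\norm{\vec{\sigma}^{\matr{B}}}\cos\theta$ with $\cos\theta\leq 1$, which is just the Cauchy--Schwarz step you invoke. Your added remarks on nonnegativity and tightness are fine but not needed.
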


\begin{proof}
  Noticing that $\vec{\sigma}^{\matr{A}}(\vec{\sigma}^{\matr{B}})^T = \norm{\vec{\sigma}^{\matr{A}}}\norm{\vec{\sigma}^{\matr{B}}}\cos\theta$ together with~\eqref{eq:card:kr} gives the result as $\cos\theta\leq 1$.
\end{proof}

As of now, writing the join as a Khatri--Rao product does not seem to bring significant benefits on computing the full join, though. The best worst-case bound is $O(\abs{\matr{A}}\abs{\matr{B}})$ from the straight-forward evaluation of the algorithm. 

As mentioned in Section~\ref{sec:operations:distinct}, those \texttt{SELECT DISTINCT} queries that choose the bound variable can be computed by selecting the non-zero columns of the Khatri--Rao product. Consequently, the cardinality of the result is $\vec{\sigma}^{\matr{A}}(\vec{\sigma}^{\matr{B}})^T$ (taking vectors $\vec{\sigma}$ as row vectors) and the whole query (not just the cardinality) can be computed in time $\Theta(\alpha + \beta)$.

All of the above computations require an access to the marginal sums. We argue that storing (and updating) this information is feasible for the data base system. In principle, the matrices can be very large, but notice that every $(s,p,o)$ triple of the data has effect to exactly one element in each of the three matrices. Hence, the total number of non-zeroes in the marginal matrices cannot exceed $3\abs{\tens{T}}$, and in practice it would be expected to stay much smaller. 

\todo{Estimate the size/result of distinct queries e.g. via min-wise hashing}

\subsection{Boolean Products}
\label{sec:join-card:bool}

When the \texttt{SELECT DISTINCT} query asks for a pair of variables, the evaluation does not (have to) involve the Khatri--Rao product, but rather the Boolean matrix product (Section~\ref{sec:operations:distinct}). Here, estimating the size of the result becomes harder. Take, for example~\eqref{eq:oper:dist:kr}: computing the result involves taking an \textsc{or} over the columns of a Khatri--Rao product, and hence, even if we know the cardinalities of each column, we can only give very coarse estimations on the final cardinality:
\begin{proposition}
\label{prop:card:union-bound}
Let $\matr{A}$ and $\matr{B}$ be two binary matrices with $n$ columns and let $\vec{\sigma}^{\matr{A}}$ and $\vec{\sigma}^{\matr{B}}$ be their column marginal sum vectors. Then,
\begin{equation}
  \label{eq:card:union-bound}
  \max_{i=1}^n\{\sigma^{\matr{A}}_i \sigma^{\matr{B}}_i\} \leq \abs{\bigvee_{i=1}^n(\matr{A}\krprod\matr{B})_{:i}} \leq \sum_{i=1}^n \sigma^{\matr{A}}_i \sigma^{\matr{B}}_i \; .
\end{equation}  
\end{proposition}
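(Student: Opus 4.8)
The plan is to establish the two inequalities in~\eqref{eq:card:union-bound} separately, both by reasoning directly about the column structure of $\matr{A}\krprod\matr{B}$. Recall from~\eqref{eq:khatri-rao} that the $i$th column of $\matr{A}\krprod\matr{B}$ is $\vec{a}_i\kprod\vec{b}_i$, and from Proposition~\ref{prop:card:kr} (or a direct computation) that the number of non-zeroes in this single column is exactly $\sigma^{\matr{A}}_i\sigma^{\matr{B}}_i$, since $\vec{a}_i\kprod\vec{b}_i$ has a $1$ in position $(p,q)$ iff $a_{pi}=b_{qi}=1$. Write $C_i$ for the support (set of non-zero row indices) of the $i$th column, so $\abs{C_i} = \sigma^{\matr{A}}_i\sigma^{\matr{B}}_i$, and note that the support of $\bigvee_{i=1}^n(\matr{A}\krprod\matr{B})_{:i}$ is exactly $\bigcup_{i=1}^n C_i$.

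For the upper bound, I would apply the union bound: $\abs{\bigcup_{i=1}^n C_i}\leq\sum_{i=1}^n\abs{C_i}=\sum_{i=1}^n\sigma^{\matr{A}}_i\sigma^{\matr{B}}_i$. For the lower bound, since $\bigcup_{i=1}^n C_i\supseteq C_j$ for every $j$, we get $\abs{\bigcup_{i=1}^n C_i}\geq\abs{C_j}=\sigma^{\matr{A}}_j\sigma^{\matr{B}}_j$ for each $j$, and taking the maximum over $j$ gives $\abs{\bigcup_{i=1}^n C_i}\geq\max_{i=1}^n\{\sigma^{\matr{A}}_i\sigma^{\matr{B}}_i\}$. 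Combining the two yields~\eqref{eq:card:union-bound}.

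There is no real obstacle here; the only thing to be careful about is the bookkeeping identity $\abs{(\matr{A}\krprod\matr{B})_{:i}} = \sigma^{\matr{A}}_i\sigma^{\matr{B}}_i$, which is the $n=1$ instance of Proposition~\ref{prop:card:kr} applied column-by-column (equivalently, the number of ones in a Kronecker product of two binary vectors is the product of their weights). Everything else is the elementary fact that the cardinality of a union of finite sets is at least the size of any one of them and at most the sum of their sizes. If one wishes, one can also remark that both bounds are tight: the lower bound is attained when all columns have the same support (e.g.\ $\matr{A}=\matr{B}$ with equal columns), and the upper bound is attained when the supports $C_i$ are pairwise disjoint.
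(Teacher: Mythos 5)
Your proof is correct: the columnwise identity $\abs{\vec{a}_i\kprod\vec{b}_i}=\sigma^{\matr{A}}_i\sigma^{\matr{B}}_i$ together with the union bound (upper) and support containment (lower) is exactly the straightforward argument the paper has in mind, which it omits as immediate and later alludes to when noting that the union bound over the rank-1 pieces recovers this proposition. Nothing is missing.
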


As a Khatri--Rao product, $\matr{A}\krprod\matr{B}$, contains a specific structure that could help estimating the cardinality better. Another approach is to estimate the cardinality from the Boolean product formulation~\eqref{eq:oper:dist:bool}. We will first sketch some estimates on the expectation that are simple and fast to compute, and could therefore be of interest to practitioners. 

\begin{proposition}
\label{prop:card:uar:matrix}
Let $\matr{A}$ and $\matr{B}$ be \by{m}{k} and \by{k}{n} binary matrices whose non-zeroes are uniformly distributed, and let $p^{\matr{A}}$ and $p^{\matr{B}}$ be the densities of $\matr{A}$ and $\matr{B}$, respectively. Then
\begin{equation}
  \label{eq:card:uar:matrix}
  \E[\abs{\matr{A}\bprod\matr{B}}] = 1 - \left(1 - p^{\matr{A}}p^{\matr{B}}\right)^k \; .
\end{equation}
\end{proposition}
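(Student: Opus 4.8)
The plan is to reduce the statement to a single‑entry computation. Since the Boolean product is defined entrywise by $(\matr{A}\bprod\matr{B})_{ij} = \bigvee_{l=1}^{k} a_{il}b_{lj}$, and expectation is linear, it suffices to determine $\Pr[(\matr{A}\bprod\matr{B})_{ij}=1]$ for a fixed but arbitrary pair $(i,j)$ and then sum this probability over all $mn$ entries.

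First I would fix $(i,j)$ and observe that $(\matr{A}\bprod\matr{B})_{ij}=0$ exactly when $a_{il}b_{lj}=0$ for every $l=1,\ldots,k$, i.e.\ when none of the $k$ two‑step ``paths'' $i\to l\to j$ is active. I would then spell out the model: by the uniform‑distribution assumption each $a_{il}$ is $1$ with probability $p^{\matr{A}}$ and each $b_{lj}$ is $1$ with probability $p^{\matr{B}}$, all entries being mutually independent. Consequently $\Pr[a_{il}b_{lj}=1]=p^{\matr{A}}p^{\matr{B}}$ for each $l$, and since the products $a_{il}b_{lj}$ for different $l$ depend on disjoint sets of matrix entries, the events $\{a_{il}b_{lj}=1\}$ are independent across $l$. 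Multiplying the per‑path ``miss'' probabilities gives $\Pr[(\matr{A}\bprod\matr{B})_{ij}=0]=(1-p^{\matr{A}}p^{\matr{B}})^{k}$, hence $\Pr[(\matr{A}\bprod\matr{B})_{ij}=1]=1-(1-p^{\matr{A}}p^{\matr{B}})^{k}$. Since this value is the same for every $(i,j)$, linearity of expectation over the $mn$ entries yields the stated formula (with the overall factor $mn$ if $\abs{\cdot}$ counts non‑zeroes rather than measuring the expected density).

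The point to be careful about is precisely which random model ``uniformly distributed non‑zeroes'' refers to. The clean derivation above needs each entry of $\matr{A}$ (resp.\ $\matr{B}$) to be an independent Bernoulli trial with the stated density; under an alternative model that fixes the exact number of ones, entries in a common row or column become negatively correlated, the independence of the $k$ path events fails, and the formula then holds only approximately (exactly in the limit of large $m,n$). I would therefore state the independence assumption explicitly at the outset, after which the argument is just the two lines of elementary probability sketched here.
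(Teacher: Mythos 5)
The paper states this proposition without proof (it is one of the results the authors describe as straightforward), so there is no authors' argument to compare against; judged on its own, your proof is correct and is surely the intended one: fix an entry $(i,j)$, note that the $k$ events $\{a_{il}b_{lj}=1\}$ involve pairwise disjoint sets of entries and hence are independent under the Bernoulli model, each holding with probability $p^{\matr{A}}p^{\matr{B}}$, giving $\Pr[(\matr{A}\bprod\matr{B})_{ij}=1]=1-(1-p^{\matr{A}}p^{\matr{B}})^k$, and conclude by linearity of expectation. You are also right to flag the two points where the statement is loose: as written, the right-hand side is the expected \emph{density} (or per-entry probability), so either $\abs{\cdot}$ must be read as density or a factor $mn$ is missing, since the number of non-zeroes cannot have expectation below $1$ for nontrivial inputs; and the phrase ``non-zeroes are uniformly distributed'' only yields the formula exactly under the independent-entries reading, whereas a fixed-count uniform placement introduces negative correlations and makes the identity hold only asymptotically. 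Making the Bernoulli independence assumption explicit, as you do, is the right way to state it.
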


We can improve our estimation of the expectation if we notice that the Boolean product is equivalent to element-wise \textsc{or}s of $k$ rank-1 binary matrices. 

\begin{proposition}
  \label{prop:card:uar:rank1}
  Let $\matr{A}$ and $\matr{B}$ be \by{m}{k} and \by{k}{n} binary matrices and let $\vec{p}^{\matr{A}} = (p ^{\matr{A}}_i)_{i=1}^k$ and $\vec{p}^{\matr{B}} = (p^{\matr{B}}_i)_{i=1}^k$ be the column densities of $\matr{A}$ and row densities of $\matr{B}$, respectively.  Assuming the non-zeroes in the rank-1 matrices $\vec{a}_{:i}\vec{b}_{i:}$ are distributed uniformly at random for all $i=1,\ldots,k$, we have that
  \begin{equation}
    \label{eq:card:uar:rank1}
    \E[\abs{\matr{A}\bprod\matr{B}}] = 1 - \prod_{i=1}^k p^{\matr{A}}_i p^{\matr{B}}_i \; .
  \end{equation}
\end{proposition}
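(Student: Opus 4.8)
The plan is to reduce $\E[\abs{\matr{A}\bprod\matr{B}}]$ — which here denotes the expected \emph{density} of $\matr{A}\bprod\matr{B}$, since the right-hand side lies in $[0,1]$ — to a single-entry Bernoulli probability, using the rank-1 decomposition recalled just above the statement: $\matr{A}\bprod\matr{B} = \bigvee_{i=1}^k \vec{a}_{:i}\vec{b}_{i:}$, an element-wise \textsc{or} of $k$ rank-1 binary matrices. Under the stated model every cell of $\matr{A}\bprod\matr{B}$ has the same marginal law, so by linearity of expectation and symmetry $\E[\abs{\matr{A}\bprod\matr{B}}] = \Pr[(\matr{A}\bprod\matr{B})_{st}=1]$ for an arbitrary fixed position $(s,t)$, and it suffices to analyse that one event.

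First I would record the exact fact that column $i$ of $\matr{A}$ has $m\,p^{\matr{A}}_i$ ones and row $i$ of $\matr{B}$ has $n\,p^{\matr{B}}_i$ ones, so the rank-1 term $\vec{a}_{:i}\vec{b}_{i:}$ has $(m\,p^{\matr{A}}_i)(n\,p^{\matr{B}}_i)$ ones among its $mn$ cells, i.e.\ density $p^{\matr{A}}_i p^{\matr{B}}_i$ — no assumption used yet. Invoking the hypothesis that these ones are spread uniformly at random, $\Pr[(\vec{a}_{:i}\vec{b}_{i:})_{st}=1] = p^{\matr{A}}_i p^{\matr{B}}_i$; moreover the $k$ events $\{(\vec{a}_{:i}\vec{b}_{i:})_{st}=1\}$ are independent over $i$, because the $i$-th term depends only on column $i$ of $\matr{A}$ and row $i$ of $\matr{B}$, and these supports are drawn independently across $i$. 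Hence $(\matr{A}\bprod\matr{B})_{st}=0$ precisely when all $k$ terms vanish at $(s,t)$, which has probability $\prod_{i=1}^k(1-p^{\matr{A}}_i p^{\matr{B}}_i)$, so $\E[\abs{\matr{A}\bprod\matr{B}}] = 1 - \prod_{i=1}^k(1-p^{\matr{A}}_i p^{\matr{B}}_i)$. A consistency check: specializing to $p^{\matr{A}}_i \equiv p^{\matr{A}}$ and $p^{\matr{B}}_i \equiv p^{\matr{B}}$ collapses this to $1-(1-p^{\matr{A}}p^{\matr{B}})^k$, recovering Proposition~\ref{prop:card:uar:matrix} — which indicates that the displayed right-hand side is a typographical slip for $1-\prod_{i=1}^k(1-p^{\matr{A}}_i p^{\matr{B}}_i)$, and that is the identity I would actually prove.

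I do not expect a genuine technical obstacle; the computation is a one-line inclusion argument. The points needing care are conceptual: (i) fixing the reading of $\abs{\cdot}$ as density, since it denotes a count elsewhere in the section; (ii) being explicit that, for this first-moment quantity, ``distributed uniformly at random'' within each rank-1 matrix is used only through the per-cell marginal $p^{\matr{A}}_i p^{\matr{B}}_i$, so it is interchangeable with the more literal model in which the support of column $i$ of $\matr{A}$ and of row $i$ of $\matr{B}$ are independent uniform subsets — the distinction (a rank-1 support is a combinatorial rectangle, not a uniform set of cells) would only surface for higher moments or concentration; and (iii) flagging the correction to the formula. I would also add a sentence situating this against Proposition~\ref{prop:card:uar:matrix}, whose coarser assumption places uniformity directly on the product $\matr{A}\bprod\matr{B}$ rather than on its rank-1 pieces, which is why~\eqref{eq:card:uar:rank1} is the sharper of the two estimates.
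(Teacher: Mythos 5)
Your argument is correct and is exactly the route the paper sets up (the text introduces this proposition precisely via the observation that $\matr{A}\bprod\matr{B}=\bigvee_{i=1}^k\vec{a}_{:i}\vec{b}_{i:}$, with each rank-1 term having density $p^{\matr{A}}_ip^{\matr{B}}_i$ and the terms independent across $i$); the paper itself prints no proof, and your one-cell marginal computation is the intended one. You are also right on both reading points: $\E[\abs{\matr{A}\bprod\matr{B}}]$ must be read as expected density, and the displayed right-hand side is a slip for $1-\prod_{i=1}^k\bigl(1-p^{\matr{A}}_ip^{\matr{B}}_i\bigr)$, which correctly specializes to $1-(1-p^{\matr{A}}p^{\matr{B}})^k$ of Proposition~\ref{prop:card:uar:matrix} and matches the remark that the union bound recovers Proposition~\ref{prop:card:union-bound}.
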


Notice that if we take the union bound of the densities of the rank-1 matrices to obtain the upper bound for the number of non-zeroes in $\matr{A}\bprod\matr{B}$, we obtain the same result as in Proposition~\ref{prop:card:union-bound}.

The above methods, while being straight forward, require only the marginal sums, which makes them relatively fast to compute. If we can access the whole matrices, we can do much better estimations. In particular, we can use the result of Amossen et al.~\cite{amossen10better}:

\begin{proposition}[$\!$\cite{amossen10better}]
  \label{prop:card:amossen}
  There exists an algorithm that obtains an $1+\varepsilon$ approximation of $\abs{\matr{A}\bprod\matr{B}}$ in expected time $O(\abs{\matr{A}} + \abs{\matr{B}})$ for any $\varepsilon > 4/\sqrt[4]{\abs{\matr{A}} + \abs{\matr{B}}}$.
\end{proposition}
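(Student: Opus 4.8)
The statement above restates a theorem of Amossen, Campagna, and Pagh~\cite{amossen10better}, so the plan is to recall the structure of their estimator and note that it transfers here verbatim. The starting observation is that $\abs{\matr{A}\bprod\matr{B}}$ is exactly the number of distinct pairs $(i,j)$ for which there is some middle index $k$ with $a_{ik}=b_{kj}=1$; viewing $\matr{A}$ and $\matr{B}$ as bipartite graphs that share their middle vertex set, this counts the pairs joined by a path of length two. The multiset of \emph{witnessed} pairs --- the pair $(i,j)$ listed once for every such $k$ --- has total size $\sum_k c_k r_k$, where $c_k$ and $r_k$ are the numbers of nonzeroes in column $k$ of $\matr{A}$ and in row $k$ of $\matr{B}$; this can be as large as $\abs{\matr{A}}\abs{\matr{B}}$ and hence far too expensive to enumerate. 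What is really wanted is therefore a distinct-elements ($F_0$) estimate over this implicitly represented multiset.

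First I would set up a bottom-$t$ sketch: fix a hash function $h$ mapping pairs $(i,j)$ into a large range, and maintain the $t$ smallest values of $h$ taken over all witnessed pairs, with $t = \Theta(1/\varepsilon^2)$. The standard analysis of min-wise / bottom-$t$ sketches then gives a $(1+\varepsilon)$-approximation of the number of distinct witnessed pairs with constant probability, amplified to high probability by taking a median of independent repetitions. The hypothesis $\varepsilon > 4/\sqrt[4]{\abs{\matr{A}}+\abs{\matr{B}}}$ is precisely what keeps $t$ below the input size, so that storing and updating the sketch does not exceed the $O(\abs{\matr{A}}+\abs{\matr{B}})$ time budget.

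The substantive step --- and the main obstacle --- is producing this sketch in time $O(\abs{\matr{A}}+\abs{\matr{B}})$ without ever materializing the $\sum_k c_k r_k$ witnessed pairs. Here one follows~\cite{amossen10better}: process the middle indices $k$ one at a time, and for each $k$ report only those pairs $(i,j)$ with $a_{ik}=b_{kj}=1$ whose hash falls below the current sketch threshold. The device that makes this output-sensitive is to build $h(i,j)$ from auxiliary hashes of $i$ and of $j$ so that, after sorting the nonzeroes of column $k$ of $\matr{A}$ and of row $k$ of $\matr{B}$ by those auxiliary hashes, the qualifying pairs form a short, easily enumerable prefix; the sorting cost at index $k$ is proportional, up to logarithmic factors removable by radix sort, to $c_k + r_k$, so it telescopes over all $k$ to $O(\abs{\matr{A}}+\abs{\matr{B}})$. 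Middle indices with an exceptionally dense column or row are handled by a separate, direct rule. Showing that the total number of pairs inspected, summed over all $k$, is $O(t)$ in expectation --- so that the reporting work also stays within budget --- is the delicate part of the analysis; carrying it out in full would amount to reproving the result of~\cite{amossen10better}, and for present purposes it suffices to invoke that reference.
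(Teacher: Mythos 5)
The paper offers no proof of this proposition at all: it is stated purely as an imported result, with the citation to Amossen et al.~\cite{amossen10better} serving as the entire justification. Your proposal, which accurately sketches the distinct-pair ($F_0$) sketching strategy of that paper and then explicitly defers the delicate output-sensitive analysis to the same reference, is correct and amounts to the same treatment, just with more expository detail than the paper provides.
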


Computing the Boolean product can also be done faster than the standard matrix product: for example, Amossen and Pagh~\cite{amossen09faster} proposed an algorithm that runs in time $\tilde{O}(s^{2/3}z^{2/3} + s^{0.862}z^{0.408})$, where $s = \abs{\matr{A}} + \abs{\matr{B}}$ is the number of non-zeroes in the input, $z=\abs{\matr{A}\bprod\matr{B}}$ is the number of non-zeroes in the output, and $\tilde{O}(\cdot)$ hides the polylogarithmic factors. For sparse input matrices, the bound is generally very good, but if the input matrices are quite dense (and the output moderately so), the overall time complexity exceeds that of the standard matrix multiplication, i.e. $\tilde{O}(m^{\omega})$ for \by{m}{m} matrices. To address this, Lingas~\cite{lingas2011fast} proposed a randomized algorithm that runs in time $\tilde{O}(m^2s^{\omega/2 - 1})$, where the factor matrices are \by{m}{m} and $s$ is as above. (See~\cite{lingas2011fast} for an extension to rectangular matrices.)


\section{Tensor Decompositions}
\label{sec:decompositions}

Similarly to matrices, decomposition is a natural operation to tensors. A tensor decomposition reveals regularities of the decomposed tensor, and these regularities can sometimes speed up the computations. Furthermore, finding the regularities is a natural data analysis task. In this section, we will first give the definition of (Boolean) tensor CP decomposition and (Boolean) tensor rank. We will then study the properties of the decompositions, especially the sparsity. Unlike for the normal decomposition, with Boolean decompositions, we can prove upper bounds on the density of the factor matrices, showing that storing the data in a decomposed format is a valid option for saving space.  

\subsection{Definitions}
\label{sec:decompositions:definition}

The so-called \emph{Boolean tensor CP decomposition}\!\footnote{The name is short for two names
  given to the same decomposition: CANDECOMP~\cite{carroll70analysis}
  and PARAFAC~\cite{harshman70foundations}.} is defined as follows:

\begin{definition}[$\!$\cite{miettinen11boolean}]
 \label{def:BCP}
  Given an \byby{n}{m}{l} binary tensor $\tens{T}$ and an integer $r$, its \emph{rank-$k$ Boolean CP decomposition} consists of three binary matrices $\matr{A}$ (\by{n}{r}), $\matr{B}$ (\by{m}{r}), and $\matr{C}$ (\by{l}{r}) such that 
  \begin{equation}
    \label{eq:BCP}
    \tens{T}=\bigvee_{i=1}^r\vec{a}_{:i}\outprod\vec{b}_{:i}\outprod\vec{c}_{:i}\; .
  \end{equation}
\end{definition}

The Boolean CP decomposition is closely connected to the concept of \emph{Boolean tensor rank}. 

\begin{definition}
  \label{def:Brank}
  A 3-way binary tensor $\tens{T}$ is \emph{rank-1 tensor} if $\tens{T}$ is an outer product of three binary vectors, that is
  \begin{equation}
    \label{eq:Brank1}
    \tens{T} = \vec{a} \outprod \vec{b} \outprod \vec{c}\; .
  \end{equation}
  The \emph{Boolean rank} of a 3-way binary tensor $\tens{T}$, denoted $\rankB(\tens{T})$, is the least integer $r$ such that there exist $r$ rank-$1$ binary tensors with
  \begin{equation}
    \label{eq:Brank}
    \tens{T} = \bigvee_{i=1}^r \vec{a}_{:i} \outprod \vec{b}_{:i} \outprod \vec{c}_{:i}\; . 
  \end{equation}
\end{definition}

Notice that the $i$th columns of the \emph{factor matrices}
$\matr{A}$, $\matr{B}$, and $\matr{C}$ of the CP decomposition define a rank-$1$ tensor
$\vec{a}_{:i}\outprod\vec{b}_{:i}\outprod\vec{c}_{:i}$. In other words, the CP
decomposition expresses the given tensor as a Boolean sum of $r$ rank-$1$ tensors.

The Boolean CP decomposition can also be expressed in terms of Boolean and Khatri--Rao matrix products using matricization: three binary matrices $\matr{A}$, $\matr{B}$, and $\matr{C}$ form a Boolean CP decomposition of $\tens{T}$ if and only if~\cite{miettinen11boolean}

\begin{align}
  \label{eq:BCP:unfold}
    \matr{T}_{(1)} &= \matr{A}\bprod(\matr{C}\krprod\matr{B})^T\; , \\
    \matr{T}_{(2)} &= \matr{B}\bprod(\matr{C}\krprod\matr{A})^T\; , \\
\intertext{and}
    \matr{T}_{(3)} &= \matr{C}\bprod(\matr{B}\krprod\matr{A})^T \; .
\end{align}

Intuitively, then, we can think of the operation that turns the three factor matrices into a tensor as an operation that takes a join of two slices followed by distinct join of the result and another slice.

\subsection{Properties of the CP Decomposition}
\label{sec:decompositions:sparsity}

Unfortunately, computing the rank or CP decomposition of a tensor is \NP-hard, both under the normal~\cite{hastad90tensor} and Boolean~\cite{miettinen11boolean} algebras. The rank is also not bounded by the smallest dimension, unlike with matrices. That is, there exist \byby{n}{m}{l} binary tensors $\tens{T}$ such that $\rankB(\tens{T}) > \min\{n, m, l\}$~\cite{miettinen11boolean}. However, we have an upper bound~\cite{miettinen11boolean},
\begin{equation}
  \label{eq:brank:bound}
  \rankB(\tens{T}) \leq \min\{nm, nl, ml\}\; .
\end{equation}

Given the above results, we cannot hope for an efficient algorithm finding the smallest CP decomposition of data. Yet, we can always find \emph{some} CP decomposition. The most na\"ive way is to unfold the data along the longest mode (so that the unfolded matrix has $\min\{nm, nl, ml\}$ columns), set this as one factor matrix, and then construct the other two factor matrices in such a way that their Khatri--Rao product is the identity matrix (how that is done is explained in~\cite{miettinen11boolean}). Henceforth we will assume that our data tensor is represented in the factorized format. 

A common motivation for storing data in factorized formats is that they can save space compared to storing the full data, essentially by a more efficient representation of regularities in the data. Yet, there usually are no studies on how much space the decomposition could save, or take. Boolean decompositions, however, do allow us to bound the density (or sparsity) of the factors with respect to the density of the data. First we repeat the result on the absolute number of non-zeroes in the factor matrices, from~\cite{miettinen11boolean}.

\begin{proposition}
  \label{prop:sparsity:abs}
  Let $\tens{T}$ be binary a tensor that has $\rankB(\tens{T}) = r$. Then $\tens{T}$ has a rank-$r$ Boolean CP decomposition to $\matr{A}$, $\matr{B}$, and $\matr{C}$ such that
  \begin{equation}
    \label{eq:sparsity:abs}
    \abs{\matr{A}} + \abs{\matr{B}} +\abs{\matr{C}} \leq 3\abs{\tens{T}}\; .
  \end{equation}
\end{proposition}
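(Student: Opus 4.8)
Proposal for proving Proposition 3.7 (the final statement, on sparsity of Boolean CP factors).

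The plan is to start from \emph{any} rank-$r$ Boolean CP decomposition of $\tens{T}$ and then trim the factor matrices until no single non-zero can be removed. Such a starting point exists because $\rankB(\tens{T}) = r$ guarantees at least one rank-$r$ decomposition, and since there are only finitely many binary matrices of the relevant sizes ($n\times r$, $m\times r$, $l\times r$), we may simply take the decomposition $\tens{T} = \bigvee_{i=1}^{r} \vec{a}_{:i}\outprod\vec{b}_{:i}\outprod\vec{c}_{:i}$ that minimizes $\abs{\matr{A}} + \abs{\matr{B}} + \abs{\matr{C}}$. The goal is then to prove the stronger statement $\abs{\matr{A}} \le \abs{\tens{T}}$, and symmetrically $\abs{\matr{B}} \le \abs{\tens{T}}$ and $\abs{\matr{C}} \le \abs{\tens{T}}$; summing the three inequalities gives \eqref{eq:sparsity:abs}.

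For the bound on $\abs{\matr{A}}$, the key observation is that in the minimal decomposition every non-zero of $\matr{A}$ owns a ``private'' non-zero of $\tens{T}$. Fix $(j,i)$ with $a_{ji} = 1$. If for \emph{every} pair $(k,l)$ with $b_{ki} = c_{li} = 1$ the entry $(j,k,l)$ were also covered by some other component $i' \neq i$ (i.e.\ $a_{ji'} = b_{ki'} = c_{li'} = 1$), then setting $a_{ji}$ to $0$ would leave the Boolean sum unchanged, contradicting minimality. Hence there is a pair $(k,l)$ with $b_{ki} = c_{li} = 1$ such that $(j,k,l)$ is covered \emph{only} by component $i$; call this entry the witness of $(j,i)$. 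It is a non-zero of $\tens{T}$, since component $i$ already makes it $1$. This assignment defines a map from the non-zeros of $\matr{A}$ to the non-zeros of $\tens{T}$.

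This map is injective: if $(j,i)$ and $(j',i')$ have the same witness $(p,q,s)$, then the first coordinate forces $j = p = j'$, while $(p,q,s)$ is covered by component $i$ and also by component $i'$; but by construction it is covered only by component $i$, so $i = i'$. Therefore $\abs{\matr{A}} \le \abs{\tens{T}}$. Permuting the roles of the three modes — fixing a non-zero $b_{ki}$ resp.\ $c_{li}$, and using minimality to produce a witness whose second resp.\ third coordinate is pinned down — yields $\abs{\matr{B}} \le \abs{\tens{T}}$ and $\abs{\matr{C}} \le \abs{\tens{T}}$ by the same injectivity argument, and adding the three bounds completes the proof.

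The main obstacle, and essentially the only step needing care, is the witness construction: one must argue that minimality of the \emph{total} number of non-zeros forbids zeroing any single entry of any factor matrix, and then translate ``$a_{ji}$ cannot be zeroed'' into the existence of a tensor entry that is exclusively covered by component $i$ at layer $j$. Everything else — existence of the minimizer, the injectivity bookkeeping, and the symmetry across the three modes — is routine. (One could equivalently work with an \emph{irreducible} decomposition, one in which no single non-zero can be set to $0$ without changing $\tens{T}$, since the witness step only uses this local property rather than global minimality.)
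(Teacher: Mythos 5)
Your proof is correct, and it is worth noting that the paper itself gives no argument for Proposition~\ref{prop:sparsity:abs}: it simply imports the result from~\cite{miettinen11boolean}. Your route -- take, among the finitely many rank-$r$ Boolean CP decompositions (which exist since $\rankB(\tens{T})=r$), one minimizing $\abs{\matr{A}}+\abs{\matr{B}}+\abs{\matr{C}}$, and then charge every non-zero $a_{ji}$ to a ``witness'' entry of $\tens{T}$ covered only by component $i$ -- is sound: if no such witness existed (including the degenerate case where $\vec{b}_{:i}$ or $\vec{c}_{:i}$ is all-zero), zeroing $a_{ji}$ would not change the Boolean sum, contradicting minimality, and your injectivity bookkeeping (first coordinate pins down $j$, exclusivity of the witness pins down $i$) is right. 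In fact you prove something stronger than what is stated: each factor separately satisfies $\abs{\matr{A}},\abs{\matr{B}},\abs{\matr{C}}\leq\abs{\tens{T}}$, which immediately gives~\eqref{eq:sparsity:abs} and mirrors the matrix-factorization analogue in the cited line of work. One caution about your closing parenthetical: the entry-level minimality you need (no single $1$ in any factor can be flipped to $0$) is \emph{stronger} than the notion of irreducibility the paper defines before Proposition~\ref{prop:sparsity:rel}, where only whole rank-$1$ components may not be dropped; component-level irreducibility alone would not justify the witness step, so you should keep the ``no single non-zero can be zeroed'' formulation (or the global minimizer), as you in fact do.
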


Inequality~\eqref{eq:sparsity:abs} is tight (consider the case of $\abs{\tens{T}} = 1$), but it might paint a slightly too pessimistic picture of the actual sparsity. Rather, we would like to follow~\cite{gillis10using} and relate the sparsity of the factors to the sparsity of the data. If $\tens{T}$ is an \byby{n}{m}{l} binary tensor, we define its \emph{sparsity} $s(\tens{T})$ as
\begin{equation}
  \label{eq:sparsity:def}
  s(\tens{T}) = 1 - \frac{\abs{\tens{T}}}{nml}\; .
\end{equation}
(The same notation is extended to matrices and vectors with one (respectively two) modes having dimension $1$.) Further, the rank-$r$ Boolean CP decomposition $(\matr{A}, \matr{B}, \matr{C})$ is \emph{reducible} if there exists an index $j\in\{1,\ldots,r\}$ such that
\begin{equation}
\bigvee_{i=1,\ldots,r}\vec{a}_{:i}\outprod\vec{b}_{:i}\outprod\vec{c}_{:i} = \bigvee_{\substack{i=1,\ldots,r\\i\neq j}} \vec{a}_{:i}\outprod\vec{b}_{:i}\outprod\vec{c}_{:i} \; .
\end{equation}
If the decomposition is not reducible, it is said to be \emph{irreducible}.

\begin{proposition}
  \label{prop:sparsity:rel}
  Let $\tens{T}$ be binary tensor that has a rank-$r$ irreducible (but not necessarily minimal) Boolean CP decomposition to factors $\matr{A}$, $\matr{B}$, and $\matr{C}$. Then
  \begin{equation}
    \label{eq:sparsity:rel}
    s(\matr{A}) + s(\matr{B}) + s(\matr{C}) \geq s(\tens{T})\; .
  \end{equation}
\end{proposition}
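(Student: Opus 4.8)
The plan is to turn \eqref{eq:sparsity:rel} into a statement about numbers of non-zeroes, reduce it to a per-column inequality, and then verify an elementary inequality on the unit cube. Let $\tens{T}$ be \byby{n}{m}{l}, so that $\matr{A}$, $\matr{B}$, $\matr{C}$ are \by{n}{r}, \by{m}{r}, \by{l}{r}. Unfolding the definition of sparsity, $s(\matr{A}) = 1 - \abs{\matr{A}}/(nr)$ and similarly for $\matr{B}$ and $\matr{C}$, while $s(\tens{T}) = 1 - \abs{\tens{T}}/(nml)$; hence \eqref{eq:sparsity:rel} is equivalent to
\[
  \frac{\abs{\matr{A}}}{nr} + \frac{\abs{\matr{B}}}{mr} + \frac{\abs{\matr{C}}}{lr} \;\leq\; 2 + \frac{\abs{\tens{T}}}{nml}\;.
\]
Write $a_i = \abs{\vec{a}_{:i}}$, $b_i = \abs{\vec{b}_{:i}}$, $c_i = \abs{\vec{c}_{:i}}$ for the supports of the $i$th columns of the factor matrices, so that $\abs{\matr{A}} = \sum_{i=1}^{r} a_i$ (and likewise for $\matr{B}$ and $\matr{C}$), and the $i$th rank-$1$ summand $\vec{a}_{:i}\outprod\vec{b}_{:i}\outprod\vec{c}_{:i}$ has exactly $a_i b_i c_i$ non-zeroes.

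The argument then rests on two facts. The first is the elementary inequality that $x + y + z - xyz \leq 2$ for all $x, y, z \in [0,1]$: the left-hand side is affine in each variable separately, with $\partial/\partial x\,(x+y+z-xyz) = 1 - yz \geq 0$, so it is non-decreasing in $x$ and maximised at $x = 1$, where it equals $1 + y + z - yz = 2 - (1-y)(1-z) \leq 2$. (This is the three-way analogue of the identity $(1-x)(1-y) \geq 0$ that drives the matrix version of the bound in~\cite{gillis10using}.) Applying it with $x = a_i/n$, $y = b_i/m$, $z = c_i/l$ gives, for each $i$,
\[
  \frac{a_i}{n} + \frac{b_i}{m} + \frac{c_i}{l} \;\leq\; 2 + \frac{a_i b_i c_i}{nml}\;.
\]
The second fact is a covering bound: since $\tens{T}$ is the Boolean sum of the rank-$1$ tensors $\vec{a}_{:i}\outprod\vec{b}_{:i}\outprod\vec{c}_{:i}$, every non-zero of the $i$th summand is also a non-zero of $\tens{T}$, so $a_i b_i c_i \leq \abs{\tens{T}}$ for all $i$ and hence $\sum_{i=1}^{r} a_i b_i c_i \leq r\abs{\tens{T}}$.

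Combining, I would sum the per-column inequality over $i = 1, \dots, r$ and apply the covering bound:
\[
  \frac{\abs{\matr{A}}}{n} + \frac{\abs{\matr{B}}}{m} + \frac{\abs{\matr{C}}}{l}
  \;\leq\; 2r + \frac{1}{nml}\sum_{i=1}^{r} a_i b_i c_i
  \;\leq\; 2r + \frac{r\abs{\tens{T}}}{nml}\;,
\]
and dividing through by $r$ gives exactly the equivalent form of \eqref{eq:sparsity:rel} displayed above. Irreducibility enters only lightly here: together with the existence of the factor matrices it guarantees $r \geq 1$ (so that the densities are well defined) and rules out degenerate decompositions of the all-zero tensor; the chain of inequalities itself is valid for any rank-$r$ Boolean CP decomposition with $r \geq 1$. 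The one genuinely non-routine step is spotting and proving the cube inequality $x + y + z - xyz \leq 2$; after that the proof is just bookkeeping and the two summations above.
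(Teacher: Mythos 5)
Your proof is correct, and at its core it is the same argument as the paper's: your cube inequality $x+y+z-xyz\leq 2$ on $[0,1]^3$ is exactly the paper's rank-one case (Lemma~\ref{lemma:sparsity:rel}) rewritten in terms of densities, and your final summation over the $r$ columns is the paper's averaging step $s(\matr{A})=\frac{1}{r}\sum_{k=1}^{r}s(\vec{a}_{:k})$. Where you genuinely diverge is the middle step: the paper bounds $s(\vec{a}_{:k}\outprod\vec{b}_{:k}\outprod\vec{c}_{:k})\geq s(\tens{T})$ by introducing residual tensors $\tens{R}_k$ and invoking irreducibility, whereas you use the direct covering bound $a_i b_i c_i\leq\abs{\tens{T}}$, which follows from anti-negativity alone, since every non-zero of a rank-one summand is a non-zero of $\tens{T}$. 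Your route is cleaner and, as you note, slightly stronger: it shows the inequality \eqref{eq:sparsity:rel} holds for \emph{every} rank-$r$ Boolean CP decomposition with $r\geq 1$, so the irreducibility hypothesis is not actually needed for this bound; in the paper it serves only the residual-tensor bookkeeping (giving the strict increase $s(\tens{R}_k)>s(\tens{R}_{k-1})$), which the final inequality never uses.
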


Our proof uses a similar technique as~\cite{gillis10using}. We will first proof the following special case.

\begin{lemma}
  \label{lemma:sparsity:rel}
  Proposition~\ref{prop:sparsity:rel} holds when $r=1$.
\end{lemma}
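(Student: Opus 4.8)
The plan is to reduce the $r=1$ case to an elementary inequality among three reals in $[0,1]$. When $r=1$ the factor matrices are single columns, so we may write $\matr{A}=\vec{a}$, $\matr{B}=\vec{b}$, $\matr{C}=\vec{c}$ and $\tens{T}=\vec{a}\outprod\vec{b}\outprod\vec{c}$ for an \byby{n}{m}{l} tensor $\tens{T}$. Irreducibility at $r=1$ just says that dropping the only rank-$1$ term does not reproduce $\tens{T}$, i.e.\ $\tens{T}$ is not the all-zero tensor, so none of $\vec{a},\vec{b},\vec{c}$ is the zero vector. Put $\alpha=\abs{\vec{a}}$, $\beta=\abs{\vec{b}}$, $\gamma=\abs{\vec{c}}$, so that $1\le\alpha\le n$, $1\le\beta\le m$, $1\le\gamma\le l$.

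First I would note that the number of non-zeroes of a rank-$1$ binary tensor factorizes: $t_{ijk}=a_ib_jc_k$, which equals $1$ exactly when $a_i=b_j=c_k=1$, so $\abs{\tens{T}}=\alpha\beta\gamma$. Substituting this together with $s(\vec{a})=1-\alpha/n$, $s(\vec{b})=1-\beta/m$, $s(\vec{c})=1-\gamma/l$ and $s(\tens{T})=1-\alpha\beta\gamma/(nml)$ into~\eqref{eq:sparsity:rel}, and writing $x=\alpha/n$, $y=\beta/m$, $z=\gamma/l$ (all in $(0,1]$), the claimed inequality becomes
\[
  2 + xyz \ \ge\ x + y + z \; .
\]

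It then remains to prove this inequality for $x,y,z\in[0,1]$, which I would do by telescoping two nonnegativity facts of the form $(1-u)(1-v)\ge 0$. From $(1-x)(1-y)\ge 0$ we get $x+y\le 1+xy$, hence $x+y+z\le 1+xy+z$; and from $(1-z)(1-xy)\ge 0$ (valid because $xy\le 1$) we get $xy+z\le 1+xyz$, hence $1+xy+z\le 2+xyz$. Chaining these gives $x+y+z\le 2+xyz$, and unwinding the substitution yields~\eqref{eq:sparsity:rel} in the case $r=1$. (Alternatively one can observe that $2+xyz-x-y-z$ is affine in each variable separately, hence minimized over the cube $[0,1]^3$ at a vertex, and then check the finitely many vertices, all of which give a value $\ge 0$.)

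There is no genuine obstacle in this lemma; the only care needed is bookkeeping in the reduction --- that the three ratios lie in $[0,1]$, which is immediate, and that $\abs{\tens{T}}$ multiplies over the three modes, which uses that the vectors are binary. The inequality $2+xyz\ge x+y+z$ is the real content, and an $r$-fold variant of the same telescoping argument is what I would expect to drive the proof of the general Proposition~\ref{prop:sparsity:rel}.
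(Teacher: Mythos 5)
Your proof is correct and follows essentially the same route as the paper: both rest on the multiplicativity $\abs{\tens{T}}=\abs{\vec{a}}\,\abs{\vec{b}}\,\abs{\vec{c}}$ for a rank-$1$ binary tensor, i.e.\ $1-s(\tens{T})=(1-s(\vec{a}))(1-s(\vec{b}))(1-s(\vec{c}))$, and then an elementary inequality for numbers in $[0,1]$. The only (cosmetic) difference is that the paper expands this product in the sparsities and drops the nonnegative correction terms, whereas you verify the equivalent inequality $2+xyz\ge x+y+z$ in the complementary variables by telescoping.
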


\begin{proof}
  Tensor $\tens{T}$ must be rank-1. Hence, $\abs{\tens{T}} = \abs{\matr{A}}\abs{\matr{B}}\abs{\matr{C}}$, or equivalently,
  $1 - s(\tens{T}) = (1 - s(\matr{A})) (1 - s(\matr{B})) (1 - s(\matr{C}))$. It follows that 
  \[
  \begin{split}
  s(\tens{T}) &= s(\matr{A}) + s(\matr{B}) + s(\matr{C}) - s(\matr{A}) s(\matr{B}) - s(\matr{A}) s(\matr{C}) 
   - s(\matr{B}) s(\matr{C}) + s(\matr{A}) s(\matr{B}) s(\matr{C}) \\
  &\leq s(\matr{A}) + s(\matr{B}) + s(\matr{C})\; ,   
  \end{split}
  \]
which holds as $s(\cdot)\in[0,1]$.
\end{proof}

\begin{proof}[Proof of Proposition~\ref{prop:sparsity:rel}]
  For this proof we need the concept of a \emph{residual tensor} $\tens{R}_k$. Let $\tens{R}_1 = \tens{T}$, and for $k=2,\ldots,r$, let $\tens{R}_k = \tens{R}_{k-1}\land \lnot\left(\vec{a}_{:k}\outprod\vec{b}_{:k}\outprod\vec{c}_{:k}\right)$, where the \textsc{and} and \textsc{not} are element-wise. Notice that, as the decomposition is irreducible, $s(\tens{R}_k) > s(\tens{R}_{k-1})$. 

  We now have for $k=1,\ldots,r$
  \[
  \begin{split}
  s(\vec{a}_{:k}) + s(\vec{b}_{:k}) + s(\vec{c}_{:k}) &\geq s(\vec{a}_{:k}\outprod\vec{b}_{:k}\outprod\vec{c}_{:k}) \geq s(\tens{R}_k) \geq s(\tens{T})
  \end{split}
  \]
  The first inequality follows from Lemma~\ref{lemma:sparsity:rel}, and the next ones from the fact that the Boolean semi-ring is anti-negative. The statement follows, as
  \[
  s(\matr{A}) = \frac{1}{r}\sum_{k=1}^r s(\vec{a}_{:k}),
  \]
  and similarly for $s(\matr{B})$ and $s(\matr{C})$.
\end{proof}

The above discussion strongly indicates that storing the data tensor in the factorized form can yield significant space savings, and in the worst case, should not increase the storage requirements too much. It seems reasonable to assume that the factorisation would also give benefits on the computations: after all, low-rank tensors should have more regular structure than high-rank tensors, and this regularity should help with the computations. Unfortunately, there does not seem to be much work towards that end. Bader and Kolda~\cite{bader2007efficient} study some operations on CP-decomposed tensors (under normal algebra), and while their results generally carry over to the Boolean algebra, the operations they study are not commonly seen in our framework.




\section{Other Related Work}
\label{sec:relatedWork}
Tensors, as a way to represent multi-way relations, have been studied in the context of databases for a long time, although the term \emph{tensor} is not commonly used. Instead, terms like Data Cube~\cite{gray1996datacube} are used to refer to the data (and the associated operations, and the framework).

It is also not new to use the binary tensor representation of \RDF data to improve the processing of \SPARQL queries. For example, Atre et al.~\cite{atre2010matrix} propose a technique to effectively process join queries using binary tensors (referred to as \emph{Bit-cubes}). Subsequent work also extends the method to left outer joins~\cite{atre2013technique}.

Tensor decompositions have  been applied to \RDF data earlier. For example, Drumond et al.~\cite{drumond2012predicting} and Nickel et al.~\cite{nickel12factorizing} use the decompositions to predict missing or unobserved \RDF triples, while Erd\H{o}s et al.~\cite{erdos13discovering} use so-called Boolean Tucker3 decomposition to discover facts from ``surface'' $(s, p, o)$ triples. 


For relational algebra, the tensor relational model~\cite{kim2011approximate} is a framework that supports both relational algebraic operations for data manipulation and tensor algebraic operations for data analysis. Kim and Candan propose efficient ways to combine the costly tensor decomposition needed for data analysis together with join~\cite{kim2011approximate}, normalization~\cite{kim2012decomposition},  and union operations~\cite{kim2014pushing} for data manipulation. For the join for instance, the authors use non-negative tensor decompositions on the components to be joined in order to approximate the decomposition after the operation.

Bakibayev et al.~\cite{bakibayev12fdb} propose factorised relational databases that use compact factorized representations of data to improve query performance and to reduce redundancy.  The authors propose a specialized query engine to handle select-project-join queries on such data efficiently.




\section{Future Work}
\label{sec:discussion}

In this paper we have presented a framework for \SPARQL queries as Boolean tensor operations. We believe that our framework can help with the analysis the \SPARQL queries, and with the development of new optimizations. 

The Boolean tensor rank measures the complexity of the tensor: the higher the rank, the less regularities the tensor has. It seems viable to use the tensor rank as a parameter of the complexity, much the same way as, say, the treewidth is used. Yet, we are unaware of any research towards that direction.

Most \SPARQL operations (especially the joins) can be expressed using the Khatri--Rao product. Unlike the normal matrix product, that has enjoyed on significant research interest over the years, the Khatri--Rao product is relatively unknown and unstudied. As it is the key for evaluating joins, insights on the computation of it can lead to direct real-world benefits. 

One can also ask the question the other way around, though. As the Khatri--Rao product (and the Boolean matrix product) can be expressed as \SPARQL queries, could this be used as a way to implement more efficient algorithms for (Boolean) tensor analysis. A relatively simple goal could be to use the data structures and indexing approaches employed by \RDF databases -- such as \textsc{rdf-3x}~\cite{neumann09rdf-3x} -- for more efficient data analysis algorithms. 

Going further, one can also consider implementing the whole data analysis algorithms on top of \RDF databases. Again, we argue that our tensor representation should help, giving a framework where many data analysis problems map easily. It is clear, though, that \SPARQL should be extended with new operations, should one want to implement more complicated data analysis directly on it (much the same ways as standard data mining methods can be integrated to relational databases; see, e.g.~\cite{netz01integrating,sarawagi00integrating}).

\todo{mention blank nodes}


\section{Conclusions}
\label{sec:conclusions}

We have presented a framework for \RDF and \SPARQL based on Boolean tensors. The framework allows us to cast \SPARQL queries as different types of matrix products, and use this formalisation to gain new insights and apply previously-defined techniques for their processing. Particularly, we showed how to count (and estimate) the cardinality of various \SPARQL join operations. We also briefly considered the Boolean CP decomposition as an option to find regularities from the data, and use these regularities for improved query processing and storage. Whether the decompositions would help on query processing is still wide open, but we did show two upper bounds for the space requirements of the decomposition. Overall, we see this work only as the first step, as we believe that the framework we presented here facilitates better analysis of \RDF\ and \SPARQL.


\bibliographystyle{abbrv}
\bibliography{library}  
\end{document}